
\documentclass[aip,graphicx]{revtex4-1}

\usepackage[utf8]{inputenc}
\usepackage[T1]{fontenc}
\usepackage{amsmath}
\usepackage[a4paper,left=2.cm,right=2.cm,top=2.5cm,bottom=2.5cm,includefoot=false,includehead=false]{geometry}
\usepackage{graphicx}
\usepackage{float}
\frenchspacing
\usepackage{indentfirst}
\usepackage{mathtools}

\usepackage{comment}
\usepackage{braket}
\usepackage{xcolor}

\usepackage{amsmath, amstext, amsopn, amsfonts, amsthm, amsbsy, latexsym, amssymb,textcomp,}

\newtheorem{lem}{Lemma}

\newtheorem{prop}{Proposition}

\newtheorem{theorem}{Theorem}
\theoremstyle{definition}
\newtheorem{ex}{Example}
\newtheorem{corollary}{Corollary}
\newtheorem{remark}{Remark}

\newtheorem{definition}{Definition}

\newcommand{\macierze}{\mathcal{M}_n(\mathbb{C})}

\draft 
\begin{document}

\title{Application of Shemesh theorem to quantum channels}

\author{Michał Białończyk}
\affiliation{Institute of Physics, Jagiellonian University, ul. Stanis\l{}awa \L{}ojasiewicza 11, 30-348 Krak\'o{}w, Poland}

\author{Andrzej Jamiołkowski}
\affiliation{Institute of Physics, Nicolaus Copernicus University, ul. Grudziądzka 5, 87-100 Toru{\'n}, Poland}
\author{Karol Życzkowski}
\affiliation{Center for Theoretical Physics, Polish Academy of  Sciences, Aleja
Lotników 32/46, PL-02-668 Warsaw, Poland}
\affiliation{Institute of Physics, Jagiellonian University, ul. Stanis\l{}awa \L{}ojasiewicza 11, 30-348 Krak\'o{}w, Poland}

\date{ July 11, 2018}

\begin{abstract}
Completely positive maps are useful in modeling the discrete evolution of quantum systems. Spectral properties of operators associated with such maps are relevant for determining the asymptotic dynamics of quantum systems subjected to multiple interactions described by the same quantum channel. We discuss a connection between the properties of the peripheral spectrum of completely positive and trace preserving map and the algebra generated by its Kraus operators $\mathcal{A}(A_1,\ldots A_K)$. By applying the Shemesh and Amitsur - Levitzki theorems to analyse the structure of the algebra $\mathcal{A}(A_1,\ldots A_K)$ one can predict the asymptotic dynamics for a class of operations.
\end{abstract}
\maketitle
\section{Introduction}
Quantum operations are fundamental mathematical objects used in description of the evolution of Quantum Opens Systems. Paradigmatic theoretical setup is the following: suppose one has the small system of interest modeled by a Hilbert space $\mathcal{H}$ and the environment, often called \emph{ancilla}, which is supported on a Hilbert space $\mathcal{K}$. If the initial states of the system and ancilla are $\rho \in \mathcal{B}(\mathcal{H})$ and $\xi \in \mathcal{B}(\mathcal{K})$ respectively, then the state of the system evolves according to:
\begin{equation}\label{eq:general_evolution}
\rho \, \rightarrow \rho' = \mathrm{Tr}_{\mathcal{K}}[U(\rho \otimes \xi)U^{\dagger}],
\end{equation}
where $U$ is a unitary that describes evolution of the total system, which is assumed to be closed and satisfies Schroedinger equation. Equation \ref{eq:general_evolution} defines a trace preserving completely positive map (\emph{quantum channel}) $\Phi \colon \mathcal{B}(\mathcal{H}) \rightarrow \mathcal{B}(\mathcal{H})$ which describes general discrete evolution of quantum open system \cite{holevo, lecture_notes}. Continuous evolution of such system is given by a family of completely positive maps $\lbrace \Phi \rbrace_{t \geq 0}$ which leads to a concept of Quantum Dynamical Semigroups \cite{alicki} and GKSL equation \cite{kossakowski, lindblad,CP17}.

On the other hand, instead of continuous time evolution, one can consider multiple application of the same channel:
\begin{equation}
\rho \, \rightarrow \Phi(\rho) \, \rightarrow \Phi^2(\rho)  \, \rightarrow \ldots \rightarrow \Phi^k(\rho) \, \rightarrow \ldots.
\end{equation}
Such evolutions are referred to as \emph{Quantum Markov Chains} \cite{accardi, szehr} or \emph{Repeated Quantum Interactions} \cite{nechita} and have been studied in the context of quantum networks and sequential measurements \cite{seq}. In analysis of such systems one is often interested in \emph{asymptotic dynamics}, that is behavior of the 
system after many applications of the same channel. 
Asymptotic dynamics is governed by the spectrum of the superoperator  $\Phi$ associated with the channel \cite{jex, nechita, chiribella}.
In particular one can show, that the dynamics becomes confined to the subspace spanned by the eigenvectors corresponding to unimodular eigenvalues, 
which are called the \emph{peripheral spectrum} of $\Phi$. 
Recent paper by  Novotny, Alber and Jex \cite{jex}  provides expressions allowing one to determine the
subspace into which the asymptotic dynamics is confined, also called {\sl attractor space}.
In this approach one assumes that the peripheral spectrum is known,
but for given quantum map 
the spectrum of the corresponding superoperator
can be difficult to determine.

Therefore it would be useful to have some \emph{effective} criteria to narrow the set of possible elements of the peripheral spectrum. By an effective criterion (or effective condition) we mean any procedure employing only finite
number of arithmetic operations. We emphasize that in applications of mathematics to physics it is often crucial to have some effective (computable) criteria then purely theoretical ones.
The main goal of this work is to provide operational criteria applicable to a wide class of quantum channels
allowing us to describe the peripheral spectrum
and to detect possible cycles appearing in asymptotic dynamics.



\subsection{Preliminaries and notation}

Let $\mathcal{M}_n(\mathbb{C})$ be the space of $n\times n $ complex matrices. We denote by $\mathcal{M}^{+,1}_n(\mathbb{C})$ the set of density matrices.  The general Markovian evolution of a quantum system can be described by a completely positive trace preserving (CPTP) map $\Phi \:\colon \macierze \rightarrow \macierze$ restricted to density matrices. It can be shown \cite{choi} that any completely positive (CP) map can be expressed in the form: 
\begin{equation}\label{eq:kraus_form}
\Phi(X) = \sum_{i=1}^K A_i X A_i^{\dagger},
\end{equation}
where $K \leq n^2$ and $A_i$ are $n \times n$ matrices called \emph{Kraus operators}. The map of the above form is trace preserving, if
\begin{equation}
\sum_{i=1}^K A_i^{\dagger} A_i = \mathbb{I}.
\end{equation}
Completely positive and trace preserving maps are called \emph{quantum operations} or \emph{quantum channels}. Furthermore, if $\Phi (\mathbb{I}) = \mathbb{I}$, that is,  
\begin{equation}
\sum_{i=1}^K A_i A_i^{\dagger} = \mathbb{I},
\end{equation}
then such $\Phi$ is called \emph{unital}. A trace preserving and unital map is also refered to as bistochastic map. 

For a map $\Phi$ in the form (\ref{eq:kraus_form}) the  \emph{dual map} is defined as
\begin{equation}\label{def:dual}
\Phi^{\circ}(X) = \sum_{i=1}^K A_i^{\dagger} X A_i.
\end{equation}
Furthermore, operators corresponding to maps $\Phi$ and $\Phi^{\circ}$ have the same spectrum (counting multiplicities).

Examining the form (\ref{eq:kraus_form}) we see that every CP map can be associated with the subalgebra of $\macierze$, namely the algebra $\mathcal{A} (A_1,\ldots A_K)$ generated by $A_1, \ldots A_K$. Intuitively, this algebra contains all expressions in the form $A_{i_1}^{n_1} A_{i_2}^{n_2} \ldots A_{i_K}^{n_K}$ and their linear combinations. This algebra is independent of the particular representation (\ref{eq:kraus_form}), and thus we will also use the notation $\mathcal{A}(\Phi)$. 

Every quantum map has to be positive , meaning that it preserves the positive cone $\mathcal{M}_n^+(\mathbb{C})$ in $\macierze$. Therefore, similarly to the case of the positive maps one can introduce the notion of \emph{irreducibility} :

\begin{definition}
A completely map $\Phi \:\colon \macierze \rightarrow \macierze$ is called \emph{irreducible} if there exists no nontrivial face of the cone $\mathcal{M}_n^+(\mathbb{C})$ invariant under the map $\Phi$.  
\end{definition}  

If the map $\Phi$ is given by its Kraus decomposition the above definition can be expressed in an equivalent way \cite{farenick} :

\begin{theorem}[Farenick]\label{thm:farenick}
A completely positive map $\Phi \: \colon \macierze \rightarrow \macierze$ in the form (\ref{eq:kraus_form}) is irreducible if and only if the operators $A_1,\ldots A_K$ have no nontrivial common invariant subspace (the trivial subspaces are $\lbrace 0 \rbrace$ and $\mathbb{C}^n$). 
\end{theorem}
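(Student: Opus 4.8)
The plan is to route the argument through the description of the faces of the positive semidefinite cone. Recall that the faces of $\mathcal{M}_n^+(\mathbb{C})$ are exactly the sets
\begin{equation}
F_V \;=\; \bigl\{\, X \in \mathcal{M}_n^+(\mathbb{C}) \;:\; \operatorname{ran}(X) \subseteq V \,\bigr\},
\end{equation}
one for each linear subspace $V \subseteq \mathbb{C}^n$, and that $V \mapsto F_V$ is an inclusion-preserving bijection; in particular $F_V$ is a \emph{nontrivial} face precisely when $V \neq \{0\}$ and $V \neq \mathbb{C}^n$. That each $F_V$ is a face is elementary: it is a convex subcone, and if $0 \le Y \le X \in F_V$ then $\operatorname{ran}(Y) \subseteq \operatorname{ran}(X) \subseteq V$, so $Y \in F_V$. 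The opposite statement, that \emph{every} face is of this form, is the classical classification of the facial structure of the cone, which I would cite as external input. Granting it, the whole theorem reduces to one clean assertion: for a subspace $V$,
\begin{equation}\label{eq:plan-equiv}
\Phi(F_V) \subseteq F_V \quad\Longleftrightarrow\quad A_i V \subseteq V \ \text{ for every } i = 1, \ldots, K.
\end{equation}

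For the backward direction of \eqref{eq:plan-equiv} ($A_i V \subseteq V$ for all $i$ implying $\Phi(F_V)\subseteq F_V$), I would take $X \in F_V$ and use the factorisation $X = X^{1/2}(X^{1/2})^{\dagger}$, so that $A_i X A_i^{\dagger} = (A_i X^{1/2})(A_i X^{1/2})^{\dagger}$ and therefore $\operatorname{ran}(A_i X A_i^{\dagger}) = \operatorname{ran}(A_i X^{1/2}) = A_i\bigl(\operatorname{ran}(X^{1/2})\bigr) = A_i\bigl(\operatorname{ran}(X)\bigr) \subseteq A_i V \subseteq V$; since the range of a sum of positive semidefinite matrices is the span of their ranges, $\operatorname{ran}(\Phi(X)) \subseteq V$, i.e.\ $\Phi(X) \in F_V$. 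For the forward direction, assume $\Phi(F_V) \subseteq F_V$ and pick an arbitrary unit vector $v \in V$; then $v v^{\dagger} \in F_V$, hence $\Phi(v v^{\dagger}) = \sum_{i} (A_i v)(A_i v)^{\dagger} \in F_V$, and the range of this matrix is $\operatorname{span}\{A_1 v, \ldots, A_K v\}$, so each $A_i v$ lies in $V$. As $v$ was arbitrary in $V$, this gives $A_i V \subseteq V$ for all $i$.

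Combining the ingredients yields the theorem: $\Phi$ fails to be irreducible iff some nontrivial face of $\mathcal{M}_n^+(\mathbb{C})$ is $\Phi$-invariant (meaning $\Phi(F) \subseteq F$); by the classification such a face equals $F_V$ for a nontrivial subspace $V$; and by \eqref{eq:plan-equiv} this happens iff $A_1, \ldots, A_K$ have a common nontrivial invariant subspace. Taking the contrapositive gives the stated equivalence. I expect the only genuinely non-elementary ingredient to be the classification of the faces of $\mathcal{M}_n^+(\mathbb{C})$; apart from that, the steps that require some care are the elementary range identities used above — $\operatorname{ran}(MM^{\dagger}) = \operatorname{ran}(M)$, $\operatorname{ran}(X^{1/2}) = \operatorname{ran}(X)$ for $X \ge 0$, and $\operatorname{ran}\bigl(\sum_j Y_j\bigr) = \sum_j \operatorname{ran}(Y_j)$ for $Y_j \ge 0$ — together with the simple but crucial remark that invariance of the face $F_V$ only needs to be tested on its rank-one (pure) elements.
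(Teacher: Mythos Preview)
Your argument is correct. Note, however, that the paper does not supply its own proof of this statement: Theorem~\ref{thm:farenick} is simply quoted from Farenick's paper \cite{farenick} and used as a known result. So there is no in-paper proof to compare against.

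Your route is in fact the standard one: the key insight is that the facial structure of $\mathcal{M}_n^+(\mathbb{C})$ is completely captured by the lattice of subspaces of $\mathbb{C}^n$, and under this dictionary $\Phi$-invariance of a face translates exactly to common invariance of the associated subspace under the Kraus operators. The range identities you flag ($\operatorname{ran}(MM^{\dagger}) = \operatorname{ran}(M)$, $\operatorname{ran}(X^{1/2}) = \operatorname{ran}(X)$, and $\operatorname{ran}\bigl(\sum_j Y_j\bigr) = \sum_j \operatorname{ran}(Y_j)$ for $Y_j \ge 0$) are indeed the only mildly delicate points, and you have them right. The only external input you invoke is the classification of faces of the positive-semidefinite cone, which is classical and unavoidable given how irreducibility is defined here; one could alternatively argue directly via support projections (a hereditary subcone of $\mathcal{M}_n^+(\mathbb{C})$ is determined by the projection onto the span of the ranges of its elements), which amounts to the same thing.
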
 

From Burnside theorem \cite{burnside}, a given subalgebra of $\mathcal{A} \subset \macierze$ has no nontrivial invariant subspace if and only if 
$\mathcal{A} = \macierze$. Therefore, Theorem \ref{thm:farenick} can be stated as follows : Quantum map in the form (\ref{eq:kraus_form}) is irreducible if 
and only if
\begin{equation}
\mathcal{A}(A_1,\ldots A_K) = \mathcal{M}_n(\mathbb{C}).
\end{equation} 

{

Note that the eigenvalues and eigenvectors of a complex $n\times n$ matrix cannot, in general, be computed by radicals when $n>4$ (the Abel--Ruffini theorem).
However, much of the spectral information can be obtained without radicals by using finite sequences of arithmetic operations.
For example, one can compute the minimal and characteristic polynomials in a finite number of steps.
This gives effective answers to some questions connected with the spectral structure of a given matrix.
In particular, we can check:
\begin{enumerate}
 \item  For a given matrix $A$, does $A$ have eigenvalues with multiplicity 2 or more?
 The answer is positive if the discriminant of the characteristic polynomial is equal to zero.
 \item A matrix $A$ can be represented in diagonal form if the discriminant of the minimal polynomial is equal to zero.
 \item Some invariant subspaces can be constructed in explicit form. 
 Namely, the so-called Krylov subspaces can be generated if the degree of the minimal polynomial is smaller than degree of the characteristic polynomial.
\end{enumerate}
}


The Perron-Frobenius theorem \cite{evans_krohn} states that, for a positive, unital and irreducible map $\Phi \: \colon 
\mathcal{M}_n(\mathbb{C}) \rightarrow \mathcal{M}_n(\mathbb{C})$, there exists eigenvalue $1$ which is nondegenerate, and corresponding eigenvector is strictly positive.  For completely positive, 
trace preserving or unital map one can show even more. Assuming that the \emph{peripheral spectrum} $\mathrm{spec}_1 (\Phi)$ of a given trace  preserving or unital 
positive map $\Phi \:\colon \macierze \rightarrow \macierze$ is defined as the set of eigenvalues of $\Phi$ with modulus 1, the following theorem holds 
\cite{groh} :

\begin{theorem}[Groh]\label{thm:Groh}
Let $\Phi \, \colon \mathcal{M}_n(\mathbb{C})\rightarrow \mathcal{M}_n(\mathbb{C})$ be a trace preserving or unital CP map. If $\Phi$ is irreducible, then there exists $m\in\lbrace 1,\ldots, n^2 \rbrace$ such that the peripheral spectrum of $\Phi$ has the form:
\begin{equation}
\mathrm{spec}_1(\Phi) = \lbrace e^{\frac{2\pi i}{m}k} ,\, k=1,\ldots, m \rbrace.
\end{equation}
\end{theorem}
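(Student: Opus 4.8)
The plan is to exploit the structure of the peripheral spectrum as a group under multiplication. First I would recall the standard fact that for a trace preserving or unital CP map $\Phi$ on a finite-dimensional space, the spectral radius equals $1$ and every peripheral eigenvalue is semisimple (no nontrivial Jordan blocks); this follows because the sequence $\Phi^k$ is power-bounded (a trace preserving positive map is a contraction in trace norm, and likewise a unital positive map is a contraction in operator norm, by Russo--Dye type arguments). Hence the peripheral part of $\Phi$ is diagonalizable, and one may write $\Phi = \Phi_p \oplus R$ where $\Phi_p$ acts on the peripheral eigenspace with purely unimodular spectrum and $R$ has spectral radius $<1$.

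The key step is to show $\mathrm{spec}_1(\Phi)$ is a subgroup of the circle group $\mathbb{T}$. By irreducibility and Perron--Frobenius (the theorem cited just above), the eigenvalue $1$ is simple with a strictly positive eigenvector $\sigma$ (in the trace preserving case $\sigma$ is a faithful density matrix; in the unital case $\sigma = \mathbb{I}$). I would then prove the following: if $\lambda \in \mathbb{T}$ is a peripheral eigenvalue with eigenmatrix $X$, i.e.\ $\Phi(X) = \lambda X$, then $X$ can be taken unitary (after rescaling). The idea, going back to the theory of peripheral spectra of positive maps, is that $|X|$ (in the sense of $\sqrt{X^\dagger X}$, or a suitable polar-decomposition argument) gives a fixed point or sub-fixed point of $\Phi$, which by simplicity of the eigenvalue $1$ must be proportional to $\sigma$; this forces $X = U\sigma$ (or $U$ in the unital case) for a unitary $U$, and the map $X \mapsto \sigma^{-1/2} X \sigma^{-1/2}$ conjugates the peripheral action of $\Phi$ to one implemented by unitaries. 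Then if $\Phi(X) = \lambda X$ and $\Phi(Y) = \mu Y$ with $X,Y$ unitary (in the conjugated picture), a multiplicativity relation $\Phi(XY) = \lambda\mu\, XY$ holds, showing $\lambda\mu \in \mathrm{spec}_1(\Phi)$; and $\Phi(X^\dagger) = \bar\lambda X^\dagger$ gives closure under inverses. Thus $\mathrm{spec}_1(\Phi)$ is a finite subgroup of $\mathbb{T}$, hence equal to the group of $m$-th roots of unity for some $m$; since the peripheral spectrum consists of at most $n^2$ eigenvalues, $m \le n^2$.

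The main obstacle is the multiplicativity step: establishing that on the peripheral eigenspace $\Phi$ behaves like an algebra automorphism, i.e.\ that the peripheral eigenmatrices (suitably normalized) form a group under a twisted product and that $\Phi$ respects it. This requires more than linear-algebraic bookkeeping — it uses positivity in an essential way, typically via a Schwarz-type inequality $\Phi(X^\dagger X) \ge \Phi(X)^\dagger\Phi(X)$ (valid for unital $2$-positive maps, and for the trace preserving case after passing to the dual map $\Phi^\circ$, which shares the spectrum of $\Phi$ and is unital when $\Phi$ is trace preserving). One shows the inequality is saturated on peripheral eigenmatrices, and saturation of the Schwarz inequality is exactly what yields the multiplicative structure. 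I would carry this out by: (i) reducing via $\sigma$ to the unital case; (ii) proving saturation $\Phi(X^\dagger X) = \Phi(X)^\dagger \Phi(X)$ for $|\lambda|=1$ using that $\Phi^k(X^\dagger X) - \lambda^{-k}\lambda^{k} X^\dagger X$ stays positive and bounded, forcing the defect to vanish; (iii) deducing the bimodule/multiplicativity identities and concluding that $\mathrm{spec}_1(\Phi)$ is a group.
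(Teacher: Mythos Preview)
The paper does not give its own proof of Theorem~\ref{thm:Groh}; the result is quoted from Groh and used as a black box, so there is no in-paper argument to compare against directly.

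Your outline is the standard route and is sound: reduce (by passing to the dual, which is unital when $\Phi$ is trace preserving and shares the spectrum) to the unital situation, use saturation of the Schwarz inequality on peripheral eigenvectors to place them in the multiplicative domain of $\Phi$, deduce that $\mathrm{spec}_1(\Phi)$ is closed under products and complex conjugation, hence is a finite subgroup of $\mathbb{T}$ and therefore equals the $m$-th roots of unity for some $m\le n^2$. This is exactly the mechanism the paper itself writes out later in the proof of Theorem~\ref{thm:summary} (implication $1\Rightarrow 2$), so in that sense your argument and the paper's toolkit coincide.

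Two small corrections. First, your step~(ii) for proving saturation is not the argument that works: the sentence about ``$\Phi^k(X^\dagger X)-\lambda^{-k}\lambda^k X^\dagger X$ staying positive and bounded'' does not force the defect to vanish. The clean route is the one you already set up in step~(i): with $\rho>0$ a fixed point of the dual, pair $\rho$ against $\Phi(X^\dagger X)-\Phi(X)^\dagger\Phi(X)\ge 0$; the trace is zero because $\mathrm{Tr}[\rho\,\Phi(X^\dagger X)]=\mathrm{Tr}[\Phi^{\circ}(\rho)X^\dagger X]=\mathrm{Tr}[\rho X^\dagger X]$ and $\Phi(X)^\dagger\Phi(X)=|\lambda|^2 X^\dagger X$, so faithfulness of $\rho$ kills the defect. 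Second, you do not need the peripheral eigenmatrices to be unitary: closure of $\mathrm{spec}_1(\Phi)$ under multiplication follows immediately from the multiplicative-domain identity $\Phi(X^\dagger Y)=\Phi(X)^\dagger\Phi(Y)$ once saturation is established, and closure under inverses from $\Phi(X^\dagger)=\bar\lambda X^\dagger$. The polar-decomposition/unitarity claim is extra structure (true under irreducibility) but unnecessary for the conclusion.
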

It is worth noting, that complete positivity is not necessary  -  it is sufficient to assume that a positive map $\Phi$ is trace preserving or unital and for every $X \in \mathcal{M}_n(\mathbb{C})$ satisfies the so called \emph{Schwarz inequality}\cite{evans_krohn} : $\Phi(X)^{\dagger} \Phi(X) \leq \Phi(X^{\dagger} X)$ .

According to Corrolary 7 from Ref. \cite{chiribella} one can further restrict the possible values in peripheral spectrum of an irreducible map if linear span of Kraus operators contains an invertible element:

\begin{theorem}\label{thm:stronger}
Let $\Phi \, \colon \mathcal{M}_n(\mathbb{C})\rightarrow \mathcal{M}_n(\mathbb{C})$ be a trace preserving or unital CP map with Kraus operators $\lbrace A_i \rbrace_{i=1}^K$. If $\Phi$ is irreducible and $\mathrm{span} \lbrace A_i \rbrace_{i=1}^K$ contains an invertible element, then every peripheral eigenvalue $\lambda$ of $\Phi$ satisfies $\lambda^{n} = 1$.  
\end{theorem}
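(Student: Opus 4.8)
The plan is to reduce to the unital case, use the known structure of the peripheral spectrum to produce a \emph{unitary} peripheral eigenvector, and then combine a Cauchy--Schwarz estimate with the invertibility hypothesis. First I would dispose of the trace preserving case by passing to the dual map $\Phi^{\circ}$ of (\ref{def:dual}): it is unital, its Kraus operators are $\{A_i^{\dagger}\}$, it has the same spectrum as $\Phi$, it is still irreducible by Theorem~\ref{thm:farenick} (the families $\{A_i\}$ and $\{A_i^{\dagger}\}$ have common nontrivial invariant subspaces precisely in orthogonal-complement pairs), and $\mathrm{span}\{A_i^{\dagger}\}$ still contains an invertible element, namely $B^{\dagger}$ for any invertible $B=\sum_i c_iA_i$. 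So from now on assume $\Phi$ unital and irreducible, with $\sum_i A_iA_i^{\dagger}=\mathbb{I}$. By Groh's Theorem~\ref{thm:Groh}, $\mathrm{spec}_1(\Phi)=\{\omega^k:k=1,\dots,m\}$ with $\omega=e^{2\pi i/m}$, so it suffices to prove $m\mid n$; then $\lambda^n=1$ for every peripheral $\lambda=\omega^k$. I would then invoke the classical structure theory of the peripheral spectrum of an irreducible positive unital map satisfying the Schwarz inequality \cite{evans_krohn,groh,chiribella}: the eigenvalue $1$ is simple with eigenvector $\mathbb{I}$, and the eigenspace of $\omega$ is spanned by a unitary $U$, i.e. $\Phi(U)=\omega U$ with $U^{\dagger}U=UU^{\dagger}=\mathbb{I}$.

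The core step is to show that $U^{\dagger}A_iU=\omega A_i$ for every $i$. Writing $\Phi(U)=\omega U$ in Kraus form and multiplying on the right by $U^{\dagger}$ gives $\sum_i A_iUA_i^{\dagger}U^{\dagger}=\omega\mathbb{I}$; taking the trace, using cyclicity and the Hilbert--Schmidt inner product $\langle X,Y\rangle=\mathrm{Tr}(X^{\dagger}Y)$, this reads $\sum_i\langle A_i,\,U^{\dagger}A_iU\rangle=\omega n$. Since unitary conjugation is an isometry, $\|U^{\dagger}A_iU\|_2=\|A_i\|_2$, and $\sum_i\|A_i\|_2^2=\mathrm{Tr}(\sum_i A_iA_i^{\dagger})=n$, so the Cauchy--Schwarz inequality yields $|\sum_i\langle A_i,U^{\dagger}A_iU\rangle|\le\sum_i\|A_i\|_2^2=n$, which is therefore saturated. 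Analyzing the equality case — all the inner products must be nonnegative multiples of the common phase $\omega$, each $U^{\dagger}A_iU$ must be proportional to $A_i$, and the proportionality constant is then forced to be exactly $\omega$ by the norm identity (vanishing $A_i$ being trivial) — gives $U^{\dagger}A_iU=\omega A_i$, equivalently $A_iU=\omega UA_i$, for all $i$, and hence $BU=\omega UB$ for every $B\in\mathrm{span}\{A_i\}$.

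Finally, take $B\in\mathrm{span}\{A_i\}$ invertible. From $BU=\omega UB$ we get $B^{-1}UB=\omega^{-1}U$, so $U$ and $\omega^{-1}U$ are similar and have the same characteristic polynomial; equivalently, the multiset of eigenvalues of $U$ (with multiplicities) is invariant under multiplication by $\omega^{-1}$, a primitive $m$-th root of unity. As $U$ is unitary, $0$ is not an eigenvalue, so every eigenvalue lies on an orbit of exactly $m$ distinct eigenvalues, within which the multiplicity is constant; summing the multiplicities over the orbits shows that $n$ is divisible by $m$. Hence $m\mid n$, and every peripheral eigenvalue $\lambda=\omega^k$ satisfies $\lambda^n=1$, as claimed.

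I expect the only genuinely nontrivial ingredient to be the structural input invoked in the first paragraph — that a peripheral eigenvector of an irreducible unital Schwarz map is a scalar multiple of a unitary; once that is granted, the rest is the short Cauchy--Schwarz computation above followed by elementary linear algebra, with the equality analysis and the handling of possibly vanishing $A_i$ the only places that need a little care.
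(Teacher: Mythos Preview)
The paper does not actually supply a proof of Theorem~\ref{thm:stronger}; it is quoted as Corollary~7 of Ref.~\cite{chiribella} and used as a black box throughout, so there is no in-paper proof to compare against.

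Your argument is correct and is, in outline, the same one given in Ref.~\cite{chiribella}: pass to the unital case via the dual (your observation that $\{A_i\}$ and $\{A_i^{\dagger}\}$ have nontrivial common invariant subspaces only in orthogonal-complement pairs is exactly what is needed to transport irreducibility), use the structure theory of irreducible unital Schwarz maps to get a unitary eigenvector $U$ with $\Phi(U)=\omega U$ for $\omega=e^{2\pi i/m}$, derive the intertwining relation $A_iU=\omega UA_i$, extend by linearity to an invertible $B\in\mathrm{span}\{A_i\}$, and read off $m\mid n$ from the $\omega^{-1}$-invariance of $\mathrm{spec}(U)$.

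The one place where you take a slightly different path from what the paper itself would suggest is in obtaining $A_iU=\omega UA_i$ via the Hilbert--Schmidt Cauchy--Schwarz equality case. This is fine, but note that the paper already records this intertwining relation as condition~3 of Theorem~\ref{thm:summary}: for an irreducible unital $\Phi$ one has $\mathcal{A}(\Phi)=\mathcal{M}_n(\mathbb{C})$, so condition~4 there holds and hence so does condition~3. The proof given there, expanding $\sum_i(A_iU-\omega UA_i)(A_iU-\omega UA_i)^{\dagger}=0$ as in~(\ref{eq:suma_dodatnich}), is a more direct route to the same conclusion and also yields the unitarity of $U$ for free (since $U^{\dagger}U$ and $UU^{\dagger}$ are then fixed points, hence scalar multiples of $\mathbb{I}$ by irreducibility). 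Your Cauchy--Schwarz variant is a legitimate and self-contained alternative; it simply re-derives a special case of what Theorem~\ref{thm:summary} already provides.
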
 
Note that assuming only irreducibility, one can only predict that the peripheral spectrum is a cyclic subgroup of $U(1)$ with order at most $n^2$. Theorem \ref{thm:stronger} tells, that if additionally $\mathrm{span} \, \lbrace A_i \rbrace_{i=1}^K$ (which is true for generic map $\Phi$) then one can restrict the order to the divisors of $n$, which is significant improvement.  

If a positive, trace  preserving or unital map $\Phi$ has no eigenvalues different then $1$ in its peripheral spectrum, then such map is called \emph{primitive}. Reference \cite{wieleandt} provides a useful, operational criterion for primitivity of a CP map:
\begin{theorem}
\label{thm:primitive}
For a CP map $\Phi(X) = \sum_{i=1}^K A_i X A_i^{\dagger}$ define $S_m(\Phi)$ as the linear span of all possible products of exactly $m$ Kraus operators:
\begin{equation}
S_m(\Phi) = \mathrm{span}\lbrace A_{i_1}\ldots A_{i_m}, \quad 1 \leq i_1, \ldots i_m \leq K \rbrace. 
\end{equation}
Then $\Phi$ is primitive if there exists $m\geq 1$ such that:
\begin{equation}
S_m(\Phi) = \mathcal{M}_n(\mathbb{C}).
\end{equation}
\end{theorem}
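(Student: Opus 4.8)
The plan is to obtain the conclusion by bootstrapping from the four results already recorded above (Burnside, Farenick's Theorem~\ref{thm:farenick}, Groh's Theorem~\ref{thm:Groh}, and the Perron--Frobenius theorem); the single genuine idea is to examine not $\Phi$ or $\Phi^{m}$ directly but a carefully chosen iterate. First I would record the elementary bookkeeping. Since the Kraus operators of $\Phi^{k}$ are exactly the products $A_{i_1}\cdots A_{i_k}$, the space $S_k(\Phi)$ is the linear span of the Kraus operators of $\Phi^{k}$, and $\mathcal A(\Phi^{k})$ is the algebra they generate; splitting a length-$(a+b)$ product into a length-$a$ and a length-$b$ piece and using bilinearity of multiplication gives $S_{a+b}(\Phi)=\operatorname{span}\{PQ:P\in S_a(\Phi),\ Q\in S_b(\Phi)\}$. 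Now assume $S_m(\Phi)=\macierze$ for some $m\ge1$; in particular $\mathbb I\in S_m(\Phi)$, so taking $P=\mathbb I$ above shows $S_{b+m}(\Phi)\supseteq S_b(\Phi)$ for every $b\ge0$, and hence $S_{km}(\Phi)=\macierze$ for all $k\ge1$. Because $\mathcal A(A_1,\dots,A_K)\supseteq S_m(\Phi)=\macierze$, Burnside's theorem together with Theorem~\ref{thm:farenick} makes $\Phi$ irreducible; the same reasoning applied to the Kraus operators of $\Phi^{km}$, whose span is $\macierze$, makes every $\Phi^{km}$ irreducible as well, and each $\Phi^{km}$ is again completely positive and trace preserving (resp. unital) whenever $\Phi$ is.

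The main step is then to rule out a nontrivial peripheral eigenvalue. Since $\Phi$ is an irreducible CP trace-preserving or unital map, Theorem~\ref{thm:Groh} gives $\operatorname{spec}_1(\Phi)=\{e^{2\pi i k/p}:k=1,\dots,p\}$ for some $p\ge1$. Suppose, for contradiction, that $p\ge2$; then there is $\lambda\in\operatorname{spec}_1(\Phi)$ with $\lambda\neq1$ and $\lambda^{p}=1$, together with an eigenmatrix $X\neq0$, $\Phi(X)=\lambda X$. The Perron--Frobenius theorem provides a nonzero fixed point $Y_0$ of $\Phi$ (one may take $Y_0=\mathbb I$ in the unital case, and the strictly positive fixed state in the trace-preserving case, passing if needed to the dual map $\Phi^{\circ}$, which is unital, irreducible, and has the same spectrum), and $X,Y_0$ are linearly independent because $\lambda\neq1$. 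I would now pass to the iterate $m'=p\,m$: by the first paragraph $\Phi^{m'}$ is irreducible and still CP trace preserving or unital, while $\Phi^{m'}(X)=\lambda^{m'}X=(\lambda^{p})^{m}X=X$ and $\Phi^{m'}(Y_0)=Y_0$ exhibit a two-dimensional space of fixed points of $\Phi^{m'}$, contradicting the nondegeneracy of the eigenvalue $1$ of the irreducible map $\Phi^{m'}$ asserted by the Perron--Frobenius theorem. Hence $p=1$, i.e. $\operatorname{spec}_1(\Phi)=\{1\}$, so $\Phi$ is primitive.

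The step I expect to need the most care is guaranteeing that the Perron--Frobenius nondegeneracy of the eigenvalue $1$ is genuinely available for the iterate $\Phi^{m'}$, and in the trace-preserving case as well as in the unital one; this is handled by the observation that $(\Phi^{m'})^{\circ}=(\Phi^{\circ})^{m'}$ is unital, irreducible (a common invariant subspace of the $A_i$ corresponds to one of the $A_i^{\dagger}$ via orthogonal complements), and has the same spectrum as $\Phi^{m'}$. The rest is routine manipulation; the only real insight is the choice $m'=p\,m$, which converts a hypothetical nontrivial peripheral eigenvalue of $\Phi$ into a forbidden second fixed point of the irreducible channel $\Phi^{m'}$.
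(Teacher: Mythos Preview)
The paper does not supply its own proof of Theorem~\ref{thm:primitive}; it is quoted from the literature (reference \cite{wieleandt}) without argument. So there is nothing in the paper to compare your proposal against directly.

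That said, your argument is correct. The key observation that $\mathbb I\in S_m(\Phi)$ forces $S_{km}(\Phi)=\macierze$ for every $k\ge1$, hence every $\Phi^{km}$ is irreducible, is exactly the right leverage point; combining it with Groh to know any peripheral $\lambda$ satisfies $\lambda^{p}=1$, and then choosing the iterate $m'=pm$ to manufacture a second fixed point in violation of Perron--Frobenius, is clean and complete. Your handling of the trace-preserving case via the dual is also fine: since $S_{m'}(\Phi)=\macierze$ is closed under adjoints, the Kraus operators of $(\Phi^{m'})^{\circ}$ likewise span $\macierze$, so $(\Phi^{m'})^{\circ}$ is irreducible and unital and Perron--Frobenius applies verbatim.

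For context, the original source \cite{wieleandt} argues somewhat differently: there the condition $S_m(\Phi)=\macierze$ is shown to be equivalent to $\Phi^{m}$ being a \emph{strictly positive} map (i.e.\ $\Phi^{m}(\rho)>0$ for every nonzero $\rho\ge0$), because $\Phi^{m}(\ket{\psi}\bra{\psi})=\sum B\ket{\psi}\bra{\psi}B^{\dagger}$ with $B$ ranging over a spanning set of $\macierze$, and then primitivity is read off from strict positivity. Your route trades that analytic step for the combination Groh\,+\,Perron--Frobenius already recorded in the paper, which makes it a good internal fit even if the paper itself omits the proof.
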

The main objective of this paper is to examine the possibilities of weakening the irreducibility requirement while still preserving the spectral properties similar to the conclusion of Theorem 2. We present an approach based on the investigation of the structure of $\mathcal{A}(\Phi)$ to show  the connection with  spectral properties of the corresponding operator $\Phi$. 

We will make use of the classical result stating that if a given algebra $\mathcal{A}(\Phi)$ is $^{\star}$-algebra i.e. it is closed under Hermitian conjugation, then one can choose an orthonormal basis in which $\mathcal{A}$ is block - diagonal \cite{noncommutative} :  

\begin{corollary} \label{wn:dekompozycja_unitalnych}
Let  $\mathcal{H}\simeq \mathbb{C}^n$, $\mathcal{B}(\mathcal{H})\simeq \mathcal{M}_n(\mathbb{C})$ and let $\Phi \, \colon \mathcal{M}_n(\mathbb{C}) \rightarrow \mathcal{M}_n(\mathbb{C})$ be a trace preserving or unital CP map written in the Kraus form,  
\begin{equation}
\Phi(X) = \sum_{i=1}^K A_i X A_i^{\dagger}.
\end{equation}
If $\mathcal{A}(A_1,\ldots A_K)$ is a $^{\star}$-algebra, then there exists an orthonormal basis $\lbrace e_i \rbrace_{i=1}^n$ and natural numbers $d_1,\ldots d_N$ for which all Kraus operators have the form : 
\begin{equation}\label{eq:dekompozycja_unitalnych}
A_i = 
\begin{bmatrix}
A_{i1} & 0 & \hdots & 0 \\
0      &    A_{i2} & \hdots & 0\\
\vdots & \hdots & \ddots & 0 \\
0 & \ldots & 0 & A_{iN} \\
\end{bmatrix}
\end{equation}
where each block $A_{im}$ has dimension $d_m\times d_m$, $\sum_{j=1}^N d_j=n$ and $\mathcal{A}(A_{1m},\ldots, A_{Km}) \simeq \mathcal{M}_{d_m}(\mathbb{C})$. Consequently, there exists a decomposition of the Hilbert space $\mathcal{H}$ such that
\begin{equation}\label{eq:dekompozycja_hilberta}
\mathcal{H} = \bigoplus_{j=1}^N \mathcal{H}_j,
\end{equation}
where $\mathrm{dim} \, \mathcal{H}_k = d_k$ for $k=1, \ldots N$, $A_{im} \colon \, \mathcal{H}_m \rightarrow \mathcal{H}_m$ and 
\begin{equation}
A_{im} = A_i|_{\mathcal{H}_m}.
\end{equation}
\end{corollary}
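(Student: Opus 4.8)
The plan is to recognize $\mathcal{A}:=\mathcal{A}(A_1,\ldots,A_K)$ as a finite-dimensional $C^{\star}$-algebra acting on $\mathbb{C}^n$, apply the Wedderburn (Artin--Wedderburn) structure theorem for such algebras -- the classical fact quoted as \cite{noncommutative} -- and then ``unfold'' the multiplicity spaces that it produces. First I would observe that the trace-preserving or unital hypothesis is precisely what forces $\mathbb{I}\in\mathcal{A}$: since $\mathcal{A}$ is closed under $\dagger$ by assumption, each $A_i^{\dagger}$, and hence each product $A_i^{\dagger}A_i$ (respectively $A_iA_i^{\dagger}$), lies in $\mathcal{A}$, so $\mathbb{I}=\sum_i A_i^{\dagger}A_i\in\mathcal{A}$ (respectively $\mathbb{I}=\sum_i A_iA_i^{\dagger}\in\mathcal{A}$). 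Thus $\mathcal{A}$ is a \emph{unital} $^{\star}$-subalgebra of $\macierze$; this is not a mere technicality, because a non-unital $^{\star}$-algebra would carry a proper projection as its unit and hence a zero block, incompatible with the asserted $\mathcal{A}(A_{1m},\ldots,A_{Km})\simeq\mathcal{M}_{d_m}(\mathbb{C})$.

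Next I would invoke the structure theorem in its representation-theoretic form: there exist a unitary $U\in\macierze$ and positive integers $p_1,\ldots,p_J,q_1,\ldots,q_J$ with $\sum_j p_jq_j=n$ such that, under an orthogonal identification $\mathbb{C}^n=\bigoplus_{j=1}^{J}\mathbb{C}^{p_j}\otimes\mathbb{C}^{q_j}$, one has $U\mathcal{A}U^{\dagger}=\bigoplus_{j=1}^{J}\{\,M\otimes\mathbb{I}_{q_j}:M\in\mathcal{M}_{p_j}(\mathbb{C})\,\}$. Passing from each $A_i$ to $UA_iU^{\dagger}$ amounts exactly to choosing the orthonormal basis $\{e_i\}$ in the statement, so from here on I may assume that $\mathcal{A}$ has this block form in the standard basis.

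The last step splits off the multiplicity factors. For each $j$, writing $\mathbb{C}^{q_j}=\bigoplus_{r=1}^{q_j}\mathbb{C}f_r$ in an orthonormal basis $\{f_r\}$ gives $\mathbb{C}^{p_j}\otimes\mathbb{C}^{q_j}=\bigoplus_{r=1}^{q_j}(\mathbb{C}^{p_j}\otimes f_r)$, and since $\mathbb{I}_{q_j}f_r=f_r$ each summand $\mathbb{C}^{p_j}\otimes f_r$ is invariant under every $M\otimes\mathbb{I}_{q_j}$, the action being exactly $M$ under the identification $\mathbb{C}^{p_j}\otimes f_r\simeq\mathbb{C}^{p_j}$. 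Relabelling these subspaces over all $j$ and $r$ as $\mathcal{H}_1,\ldots,\mathcal{H}_N$ with $N=\sum_j q_j$ and $d_m:=\dim\mathcal{H}_m$ (the value of $p_j$ for the relevant piece) produces the orthogonal decomposition \eqref{eq:dekompozycja_hilberta} with $\sum_m d_m=\sum_j p_jq_j=n$. Every element of $\mathcal{A}$, and in particular every $A_i$, leaves each $\mathcal{H}_m$ invariant, so in the adapted basis $A_i$ is block-diagonal as in \eqref{eq:dekompozycja_unitalnych} with $A_{im}=A_i|_{\mathcal{H}_m}$. Finally, the restriction $\pi_m\colon\mathcal{A}\to\mathcal{B}(\mathcal{H}_m)$ is a unital $^{\star}$-homomorphism whose image, by the block form of $\mathcal{A}$ above, is all of $\mathcal{M}_{d_m}(\mathbb{C})$; since $\pi_m$ is linear and multiplicative and $\mathcal{A}$ is generated by $A_1,\ldots,A_K$, its image is generated by $A_{1m},\ldots,A_{Km}$, whence $\mathcal{A}(A_{1m},\ldots,A_{Km})=\pi_m(\mathcal{A})\simeq\mathcal{M}_{d_m}(\mathbb{C})$.

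I expect the only real obstacle to lie in this final step: one must check carefully that peeling off the multiplicity spaces keeps all generators block-diagonal and that the algebra restricted to each block is the \emph{full} matrix algebra rather than some proper unital subalgebra. The remaining ingredients -- recognizing $\mathcal{A}$ as a $C^{\star}$-algebra once $\mathbb{I}\in\mathcal{A}$ is secured, and quoting the classification -- are routine.
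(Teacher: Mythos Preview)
Your argument is correct. Note, however, that the paper does not supply a proof of this corollary at all: it is stated as an immediate consequence of the classical ``non-commutative spectral theorem'' cited as \cite{noncommutative}, which is exactly the Artin--Wedderburn/representation-theoretic decomposition you invoke. Your write-up therefore spells out what the paper leaves implicit: (i) the trace-preserving or unital hypothesis is used only to force $\mathbb{I}\in\mathcal{A}$, so that the cited structure theorem applies to a \emph{unital} $^{\star}$-subalgebra; (ii) the multiplicity factors $\mathbb{C}^{q_j}$ are then split into one-dimensional pieces to obtain the block-diagonal form with full matrix algebras on each block. Both steps are handled correctly, including the point that $\mathcal{A}(A_{1m},\ldots,A_{Km})=\pi_m(\mathcal{A})$ because $\pi_m$ is an algebra homomorphism and the $A_i$ already generate $\mathcal{A}$ as an algebra (the $^{\star}$-closure being part of the hypothesis). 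In short, your approach is the same as the paper's---just made explicit.
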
 

The most important examples of the maps for which the algebra $\mathcal{A}(\Phi)$ is a $^{\star}$-algebra are unital quantum channels \cite{kribs}. In essence, the above Theorem states that if $\mathcal{A}(\Phi)$ is a $^{\star}$-algebra, then one can decompose (,,reduce'') the Kraus operators into smaller, 
irreducible blocks.

\section{Eigenspaces corresponding to unimodular eigenvalues}

The problems related to the structure of peripheral spectra, eigenvectors corresponding to unimodular eigenvalues and fixed points of CP maps (and, more generally, positive maps) have been extensively studied in literature \cite{fixed_points, lecture_notes, jex, inverse_eigenvalue, gluck, multiplicative}. In particular, the problem of finding sufficient conditions for cyclicity of the peripheral spectrum was adressed in Ref. \cite{gluck, lotz}. In this paper we attempt to give efective criteria for cyclicity of peripheral spectrum for certain quantum channels given by their Kraus decomposition. As the starting point for further considerations, in Theorem \ref{thm:summary} we present selected known results, focusing on the role of algebraic structure. We include a proof for completeness.   

For a given CP map $\Phi$ we use the notion of 
the set $\chi (\Phi)$ of unimodular eigenmatrices and
the fix point space $\chi_F (\Phi)$,

\begin{equation}
\chi (\Phi)  = \lbrace X \in \mathcal{M}_n(\mathbb{C}) \colon \, \exists \lambda \in \mathbb{C}, \, \Phi(X) = \lambda X, \, \vert \lambda \vert = 1 \rbrace,  
\end{equation}
\begin{equation}
\chi_F (\Phi)  = \lbrace X \in \mathcal{M}_n(\mathbb{C}) \colon \, \Phi(X) = X \rbrace.   
\end{equation}  

\begin{theorem}\label{thm:summary}
Let $\Phi$ be a unital, completely positive map given by its Kraus decomposition $\Phi(X) =  \sum_{i=1}^K A_i X A_i^{\dagger}$. Then, the following conditions are equivalent:
\begin{enumerate}
\item Dual map $\Phi^{\circ}$ has full rank fixed point, that is there exists $\rho > 0$ such that $\Phi^{\circ}(\rho) = \rho$.
\item The spaces $\mathrm{span}\, \chi(\Phi)$ and $\chi_F(\Phi)$ are $\star$-algebras. 
\item The following equalities hold:
\begin{equation}
\chi(\Phi) = \lbrace X \in \mathcal{M}_n(\mathbb{C}) \colon \, \exists \lambda \in \mathbb{C} \, \forall i \colon \, A_i X - \lambda X A_i = 0, \, \vert \lambda \vert  =1 \rbrace.  
\end{equation}
\begin{equation}\label{eq:expr_for_fixed}
\chi_F(\Phi) = \lbrace X \in \mathcal{M}_n(\mathbb{C}) \colon \,  \forall i \colon \, [X,A_i] = [X,A_i^{\dagger}] = 0 \rbrace. 
\end{equation}
\item The Kraus algebra $\mathcal{A}(\Phi)$ is a $\star$-algebra. 

\end{enumerate}

\end{theorem}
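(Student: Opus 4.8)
The plan is to prove the four statements equivalent through the cycle $(4)\Rightarrow(1)\Rightarrow(3)\Rightarrow(2)\Rightarrow(4)$. For $(4)\Rightarrow(1)$, apply Corollary~\ref{wn:dekompozycja_unitalnych} to the $^{\star}$-algebra $\mathcal{A}(\Phi)$: there is an orthonormal basis in which every Kraus operator is block diagonal, $A_i=\bigoplus_m A_{im}$, with $\mathcal{A}(A_{1m},\dots,A_{Km})\simeq\mathcal{M}_{d_m}(\mathbb{C})$. Unitality descends to each block, $\sum_i A_{im}A_{im}^{\dagger}=\mathbb{I}_{d_m}$, so $\Phi_m(Y)=\sum_i A_{im}YA_{im}^{\dagger}$ is a unital CP map whose Kraus algebra is all of $\mathcal{M}_{d_m}(\mathbb{C})$, hence irreducible by Theorem~\ref{thm:farenick}; since a subspace is invariant under all $A_{im}$ exactly when its orthogonal complement is invariant under all $A_{im}^{\dagger}$, the trace-preserving dual $\Phi_m^{\circ}$ is irreducible as well, and by Perron--Frobenius it has a stationary state which is necessarily faithful (the kernel of a positive stationary state is invariant under all $A_{im}^{\dagger}$, so a non-faithful one would yield a nontrivial common invariant subspace). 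Since $\Phi^{\circ}$ preserves the block decomposition, $\rho=\bigoplus_m c_m\rho_m$ with $c_m>0$ normalised is a faithful fixed point of $\Phi^{\circ}$, which is $(1)$.

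For $(1)\Rightarrow(3)$ the inclusions ``$\supseteq$'' follow at once from unitality: $A_iX=\lambda XA_i$ for every $i$ gives $\Phi(X)=\lambda X\sum_i A_iA_i^{\dagger}=\lambda X$. For ``$\subseteq$'', take $\Phi(X)=\lambda X$ with $|\lambda|=1$. The Kadison--Schwarz inequality for the unital CP map $\Phi$ gives $\Phi(X^{\dagger}X)\ge\Phi(X)^{\dagger}\Phi(X)=X^{\dagger}X$, and pairing with $\rho$ and using $\Phi^{\circ}(\rho)=\rho$ yields $\mathrm{Tr}[\rho\,\Phi(X^{\dagger}X)]=\mathrm{Tr}[\rho\,X^{\dagger}X]$; as $\rho>0$ this forces $\Phi(X^{\dagger}X)=X^{\dagger}X=\Phi(X)^{\dagger}\Phi(X)$, i.e.\ equality in Kadison--Schwarz. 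Writing $\Phi(Y)=W^{\dagger}(Y\otimes\mathbb{I})W$ with the isometry $W\psi=\sum_i (A_i^{\dagger}\psi)\otimes e_i$ (isometric because $W^{\dagger}W=\sum_i A_iA_i^{\dagger}=\mathbb{I}$), the Kadison--Schwarz defect equals $W^{\dagger}(X^{\dagger}\otimes\mathbb{I})(\mathbb{I}-WW^{\dagger})(X\otimes\mathbb{I})W$; its vanishing forces $(\mathbb{I}-WW^{\dagger})(X\otimes\mathbb{I})W=0$, so $(X\otimes\mathbb{I})W=W\Phi(X)=\lambda WX$, which read off componentwise gives $XA_i^{\dagger}=\lambda A_i^{\dagger}X$ for all $i$. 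Running the same argument for $X^{\dagger}$, which satisfies $\Phi(X^{\dagger})=\overline{\lambda}X^{\dagger}$, gives $A_iX=\lambda XA_i$ for all $i$; for $\lambda=1$ these two families of relations are precisely $[X,A_i]=[X,A_i^{\dagger}]=0$, the stated formula for $\chi_F(\Phi)$, and for general $|\lambda|=1$ they give the formula for $\chi(\Phi)$.

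Then $(3)\Rightarrow(2)$ is immediate: by $(3)$, $\chi_F(\Phi)$ is the commutant of the self-adjoint set $\{A_i,A_i^{\dagger}\}$, hence a $^{\star}$-algebra; and if $X,Y\in\chi(\Phi)$ with $A_iX=\lambda XA_i$ and $A_iY=\mu YA_i$, then $A_i(XY)=\lambda\mu(XY)A_i$ with $|\lambda\mu|=1$, so $XY\in\chi(\Phi)$, while $\Phi(X^{\dagger})=\overline{\lambda}X^{\dagger}$ shows $\chi(\Phi)$ is closed under $\dagger$; hence $\mathrm{span}\,\chi(\Phi)$ is a $^{\star}$-algebra.

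The step I expect to be the main obstacle is $(2)\Rightarrow(4)$, which closes the cycle. I would first establish the lemma $\chi_F(\Phi)=\{A_i\}'$ (equivalently $\mathcal{A}(\Phi)'$): the inclusion $\{A_i\}'\subseteq\chi_F(\Phi)$ is again unitality, and for the converse, given $X=X^{\dagger}\in\chi_F(\Phi)$, every spectral projection $P$ of $X$ lies in the $^{\star}$-algebra $\chi_F(\Phi)$, so $\Phi(P)=P$ and $\Phi(\mathbb{I}-P)=\mathbb{I}-P$; for a positive fixed point $Z$ and $v\in\ker Z$ one has $0=\langle v,\Phi(Z)v\rangle=\sum_i\|Z^{1/2}A_i^{\dagger}v\|^{2}$, so $\ker Z$ is invariant under all $A_i^{\dagger}$, and applying this to $Z=P$ and $Z=\mathbb{I}-P$ shows $P$ reduces every $A_i^{\dagger}$, whence $[A_i,P]=0$ and therefore $[A_i,X]=0$. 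Consequently $\mathcal{A}(\Phi)'=\chi_F(\Phi)$ is a $^{\star}$-algebra, so its commutant $\mathcal{A}(\Phi)''$ is one as well. The genuinely delicate part, where I expect to need the structure theory of finite-dimensional $C^{\star}$-algebras together with the observation that the mean ergodic projection $E=\lim_{N\to\infty}\frac1N\sum_{k=1}^{N}\Phi^{k}$ onto $\chi_F(\Phi)$ is a unital idempotent CP map onto a $^{\star}$-subalgebra -- hence a conditional expectation in the sense of Tomiyama -- is to promote this to $\mathcal{A}(\Phi)=\mathcal{A}(\Phi)''$, i.e.\ to $\mathcal{A}(\Phi)$ itself being $^{\star}$-closed, which by the bicommutant theorem is exactly statement $(4)$.
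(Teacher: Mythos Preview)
Your cycle $(4)\Rightarrow(1)\Rightarrow(3)\Rightarrow(2)\Rightarrow(4)$ runs opposite to the paper's $(1)\Rightarrow(2)\Rightarrow(3)\Rightarrow(4)\Rightarrow(1)$. The step $(4)\Rightarrow(1)$ matches the paper exactly. Your $(1)\Rightarrow(3)$ via the Stinespring isometry and the Kadison--Schwarz defect is a clean shortcut through what the paper does in two stages: first the paper uses equality in the Schwarz inequality together with the faithful fixed point to obtain the multiplicative-domain identity $\Phi(A^{\dagger}X)=\Phi(A^{\dagger})\Phi(X)$, which yields $(2)$; then, assuming $(2)$, it expands $\sum_i(A_iX-\lambda XA_i)(A_iX-\lambda XA_i)^{\dagger}$ and uses $\Phi(XX^{\dagger})=XX^{\dagger}$ to force each summand to vanish, which yields $(3)$. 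Your isometry computation reaches the same intertwining relations $A_iX=\lambda XA_i$ and $XA_i^{\dagger}=\lambda A_i^{\dagger}X$ in one stroke. Your $(3)\Rightarrow(2)$ is indeed immediate.

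The genuine gap is $(2)\Rightarrow(4)$, as you yourself flag. Your lemma $\chi_F(\Phi)=\mathcal{A}(\Phi)'$ is correct and nicely argued via spectral projections; this is precisely the identification the paper makes (from $(3)$ rather than $(2)$) at the start of its step $(3)\Rightarrow(4)$. The paper then closes the loop in a single sentence by invoking the von Neumann bicommutant theorem to conclude $\mathcal{A}(\Phi)=\chi_F(\Phi)'$. Your instinct that this is delicate is correct: the finite-dimensional bicommutant identity $\mathcal{A}=\mathcal{A}''$ is a theorem about unital $\star$-subalgebras, so invoking it to \emph{establish} $\star$-closedness is circular (e.g.\ the upper-triangular matrices in $M_2$ have commutant $\mathbb{C}\,\mathbb{I}$ and bicommutant all of $M_2$, yet are not $\star$-closed). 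However, your proposed remedy via Tomiyama's theorem and the mean ergodic projection remains only a sketch and does not actually produce $A_i^{\dagger}\in\mathcal{A}(\Phi)$. In short, your argument stalls at exactly the point where the paper simply cites bicommutant; the difference is that you name the difficulty while the paper treats it as routine.
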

\begin{proof}

$1 \Rightarrow 2$. Let $\rho>0$ be full rank fixed point of a map $\Phi^{\circ}$ and $A$ be an eigenvector to some peripheral eigenvalue $\lambda$. Then $A^{\dagger}$ is an eigenvector to eigenvalue $\bar{\lambda}$ and one has $\Phi(A^{\dagger}) \Phi(A) = A^{\dagger} A$. On the other hand, it holds $\mathrm{Tr}\,[\Phi(A^{\dagger} A) \, \rho] = \mathrm{Tr} \, [A^{\dagger} A \,  \Phi^{\circ} (\rho)] = \mathrm{Tr} \, [A^{\dagger} A \, \rho]$.  Putting this together, one obtains:
\begin{equation}
\mathrm{Tr} [(\Phi(A^{\dagger} A) - \Phi(A^{\dagger})\, \Phi(A))\rho] = 0. 
\end{equation}
Now, from Schwarz inequality $\Phi(A^{\dagger} A) - \Phi(A^{\dagger}) \Phi(A) \geq 0$ and because $\rho > 0$ it must be $\Phi(A^{\dagger} A) = \Phi(A^{\dagger}) \Phi(A)$. Equality in Schwarz inequality implies \cite{lecture_notes, paulsen}, that for every matrix $X$ one has:
\begin{equation}
\Phi(A^{\dagger} X) = \Phi(A^{\dagger} ) \Phi(X).
\end{equation} 
Taking arbitrary $A^{\dagger}, X \in \chi(\Phi)$, from above equation one gets $A^{\dagger} X \in \chi(\Phi)$. Therefore $\mathrm{span} \, \chi(\Phi)$ is closed with respect to multiplication and hermitian conjugation and, in conseqence is a $\star$-algebra. Similar reasoning leads to conclusion, that $ \chi_F(\Phi)$ is a $\star$-algebra. 

$2 \Rightarrow 3$. Let $\Phi(X) = \lambda X$ for $\vert \lambda \vert = 1$. Then $\Phi(X^{\dagger}) = \bar{\lambda} X^{\dagger}$. Assuming that $\chi(\Phi)$ is closed with respect to multiplication and using positivity of $\Phi$ one obtains $\Phi(X X^{\dagger}) = X X^{\dagger}$. Now,
\begin{equation}
\label{eq:suma_dodatnich}
0 \leq \sum_{i=1}^K (A_i X - \lambda X A_i )(A_i X - \lambda X A_i)^{\dagger} =  \sum_{i=1}^K A_i X X^{\dagger} A_i^{\dagger} - \vert \lambda \vert^2 X X^{\dagger} - \vert \lambda \vert^2 XX^{\dagger} + \vert \lambda \vert^2 X X^{\dagger} = 0. 
\end{equation}  
Because each term $(A_i X - \lambda X A_i )(A_i X - \lambda X A_i)^{\dagger}$ is positive-semidefinite, equation (\ref{eq:suma_dodatnich}) can be satisfied only if $A_i X - \lambda X A_i = 0$ far all $i$. The same reasoning leads to the expression for $\chi_F(\Phi)$. 

$3 \Rightarrow 4$. From equation (\ref{eq:expr_for_fixed}) it follows, that $\chi_F(\Phi)$ is a commutant of $\mathcal{A}(\Phi)$. Von Neumann bicommutant theorem implies, that $\mathcal{A}$ is a commutant of $\chi_F(\Phi)$ and therefore is also a $\star$-algebra.   

$4 \Rightarrow 1$. If $\mathcal{A}(\Phi)$ is a $\star$-algebra, then using decomposition from Corollary \ref{wn:dekompozycja_unitalnych} one can define irreducible CP maps $\Phi^{\circ}_m(X) = \sum_{i=1}^K A^{\dagger}_{im} X A_{im}$, $i=1, \ldots N$. From Perron - Frobenius theorem, map $\Phi^{\circ}_m$ has strictly positive fixed point $\rho_m \in \mathcal{B}(\mathcal{H}_m)$. Now the matrix
\begin{equation}
\rho = 
\begin{bmatrix}
\rho_{1} & 0 & \hdots & 0 \\
0      &    \rho_{2} & \hdots & 0\\
\vdots & \hdots & \ddots & 0 \\
0 & \ldots & 0 & \rho_{N} \\
\end{bmatrix}
\end{equation}
is strictly positive fixed point of $\Phi^{\circ}$. 
\end{proof}
Theorem \ref{thm:summary} assumes unitality - however, in quantum information one usally operates on trace preserving CP - maps. Reference \cite{jex} provides the ,,dual'' version of Theorem \ref{thm:summary} (Thm. V.2). To emphasize the connections, we present it in the form analogous to Theorem \ref{thm:summary} (the proof uses similar techniques):

\begin{theorem}\label{thm:novotny}
Let $\Phi$ be trace preserving, completely positive map given by its Kraus decomposition $\Phi(X) =  \sum_{i=1}^K A_i X A_i^{\dagger}$. Then, the following conditions are equivalent:
\begin{enumerate}
\item The Kraus algebra $\mathcal{A}(\Phi)$ is a $\star$-algebra. 
\item The map $\Phi$ has full rank fixed point, that is there exists $\rho > 0$ such that $\Phi(\rho) = \rho$. Moreover, following equalities are true:
\begin{equation}
\chi(\Phi) = \lbrace X \in \mathcal{M}_n(\mathbb{C}) \colon \, \exists \lambda \in \mathbb{C} \, \forall i \colon \, A_i X \rho^{-1} - \lambda X \rho^{-1} A_i = 0, \, \vert \lambda \vert=1 \rbrace. 
\end{equation}
\begin{equation}
\chi_F(\Phi) = \lbrace X \in \mathcal{M}_n(\mathbb{C}) \colon \,  \forall i \colon \, X \rho^{-1} A_i - A_i X \rho^{-1} = 0 \rbrace.
\end{equation}
Thus, the spaces $\mathrm{span} \, \chi(\Phi)$ and $\chi_F(\Phi)$ are $\star$-algebras with respect to modified product $X \odot Y = X \rho^{-1} Y$. 
\end{enumerate}
\end{theorem}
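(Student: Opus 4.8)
The plan is to reduce the whole statement to the unital case already settled in Theorem \ref{thm:summary}, by means of two standard devices: passing to the dual map $\Phi^{\circ}$ (which is unital because $\Phi$ is trace preserving), and conjugating $\Phi$ itself by $\rho^{1/2}$ to manufacture a genuinely unital channel.

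First I would record the bookkeeping. Since $\Phi$ is trace preserving, $\sum_i A_i^{\dagger}A_i=\mathbb{I}$, hence $\Phi^{\circ}(\mathbb{I})=\mathbb{I}$, so $\Phi^{\circ}$ is a unital CP map with $(\Phi^{\circ})^{\circ}=\Phi$; moreover $\mathcal{A}(\Phi^{\circ})=\mathcal{A}(A_1^{\dagger},\dots,A_K^{\dagger})=\lbrace Z^{\dagger}\colon Z\in\mathcal{A}(\Phi)\rbrace$, which is a $\star$-algebra exactly when $\mathcal{A}(\Phi)$ is. Applying Theorem \ref{thm:summary} to the unital map $\Phi^{\circ}$ and reading off the equivalence of its conditions $1$ and $4$ (now $\Phi$ plays the role of the dual) gives at once: $\mathcal{A}(\Phi)$ is a $\star$-algebra if and only if $\Phi=(\Phi^{\circ})^{\circ}$ has a strictly positive fixed point $\rho$. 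This settles $2\Rightarrow 1$ and the first assertion of $1\Rightarrow 2$.

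It then remains, assuming $\mathcal{A}(\Phi)$ is a $\star$-algebra with fixed point $\rho>0$, to obtain the explicit descriptions of $\chi(\Phi)$, $\chi_F(\Phi)$ and the $\odot$-algebra structure. For this I would introduce $\Psi(Y):=\rho^{-1/2}\Phi(\rho^{1/2}Y\rho^{1/2})\rho^{-1/2}$, which is CP with Kraus operators $B_i=\rho^{-1/2}A_i\rho^{1/2}$ and unital since $\Psi(\mathbb{I})=\rho^{-1/2}\Phi(\rho)\rho^{-1/2}=\mathbb{I}$. A one-line computation shows that its dual, $\Psi^{\circ}(Z)=\rho^{1/2}\Phi^{\circ}(\rho^{-1/2}Z\rho^{-1/2})\rho^{1/2}$, satisfies $\Psi^{\circ}(\rho)=\rho^{1/2}\Phi^{\circ}(\mathbb{I})\rho^{1/2}=\rho>0$, so $\Psi$ meets condition $1$ of Theorem \ref{thm:summary}; consequently all of its conditions $2$--$4$ hold for $\Psi$, giving the commutator descriptions of $\chi(\Psi)$ and $\chi_F(\Psi)$ in terms of the $B_i$ and the fact that $\mathrm{span}\,\chi(\Psi)$ and $\chi_F(\Psi)$ are ordinary $\star$-algebras. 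Transferring this back through the linear bijection $T(X)=\rho^{-1/2}X\rho^{-1/2}$, which intertwines the maps ($\Psi\circ T=T\circ\Phi$) so that $\Phi(X)=\lambda X\iff\Psi(T(X))=\lambda T(X)$, I would substitute $Y=T(X)$ and $B_i=\rho^{-1/2}A_i\rho^{1/2}$ into $B_iY-\lambda YB_i=0$ and cancel the outer factors $\rho^{\pm1/2}$; this produces exactly $A_iX\rho^{-1}-\lambda X\rho^{-1}A_i=0$, i.e. the claimed formulas (a single commutator suffices in the description of $\chi_F$ because $\mathcal{A}(\Psi)$ is $\star$-closed, so commuting with all $B_i$ forces commuting with all $B_i^{\dagger}$). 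Finally, since $T(X)T(Y)=\rho^{-1/2}(X\rho^{-1}Y)\rho^{-1/2}=T(X\odot Y)$ and $T(X)^{\dagger}=T(X^{\dagger})$, the map $T$ is a $\star$-isomorphism from $(\mathcal{M}_n(\mathbb{C}),\odot,\dagger)$ onto $(\mathcal{M}_n(\mathbb{C}),\cdot,\dagger)$, so pulling back the $\star$-subalgebras $\mathrm{span}\,\chi(\Psi)$ and $\chi_F(\Psi)$ shows that $\mathrm{span}\,\chi(\Phi)$ and $\chi_F(\Phi)$ are $\star$-subalgebras for the modified product $\odot$.

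The only genuine work is bookkeeping: placing the conjugations by $\rho^{\pm1/2}$ correctly, verifying $\Psi^{\circ}(\rho)=\rho$ so that Theorem \ref{thm:summary} really applies to $\Psi$, and clearing the $\rho$-factors in the commutator relation carefully enough to land on the asymmetric expression $A_iX\rho^{-1}-\lambda X\rho^{-1}A_i$ rather than some conjugate of it. I do not expect any idea beyond Theorem \ref{thm:summary} to be needed.
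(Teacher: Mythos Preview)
Your proof is correct and, in fact, cleaner than what the paper does. The paper does not spell out a proof of Theorem \ref{thm:novotny}; it merely cites \cite{jex} and remarks that ``the proof uses similar techniques'' to those of Theorem \ref{thm:summary}, meaning one is expected to rerun the Schwarz--inequality argument, the positivity computation $\sum_i(A_iX-\lambda XA_i)(\cdots)^{\dagger}\ge 0$, and the bicommutant step, this time with the weight $\rho$ inserted throughout. Your approach sidesteps all of that repetition: the equivalence ``$\mathcal{A}(\Phi)$ is a $\star$-algebra $\Leftrightarrow$ $\Phi$ has a full-rank fixed point'' is read off directly as the $1\Leftrightarrow 4$ equivalence of Theorem \ref{thm:summary} applied to the unital map $\Phi^{\circ}$, and the commutator formulas together with the $\odot$-algebra structure are obtained by conjugating $\Phi$ into the genuinely unital channel $\Psi(Y)=\rho^{-1/2}\Phi(\rho^{1/2}Y\rho^{1/2})\rho^{-1/2}$ and transporting the conclusions of Theorem \ref{thm:summary} for $\Psi$ back through the $\star$-isomorphism $T(X)=\rho^{-1/2}X\rho^{-1/2}$ between $(\mathcal{M}_n,\odot,\dagger)$ and $(\mathcal{M}_n,\cdot,\dagger)$. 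What this buys you is a proof consisting entirely of bookkeeping once Theorem \ref{thm:summary} is available; the only points requiring care --- verifying $\Psi^{\circ}(\rho)=\rho$ so that condition 1 of Theorem \ref{thm:summary} really applies to $\Psi$, and factoring $B_iT(X)-\lambda T(X)B_i=\rho^{-1/2}\bigl(A_iX\rho^{-1}-\lambda X\rho^{-1}A_i\bigr)\rho^{1/2}$ to recover the stated asymmetric relation --- you handle correctly.
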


Theorems \ref{thm:summary} and \ref{thm:novotny} exhibit connections between the structure of the sets $\mathcal{A}(\Phi)$, $\chi(\Phi)$ and $\chi_F(\Phi)$ and give neat expressions for them - however, they give no information about possible values of $\lambda$. We show how to exploit the structure of $\mathcal{A}(\Phi)$ to draw some conclusions from it. Basic result, used as the starting point for application of Shemesh and Amitsur - Levitzki theorems in section III is stated in Theorem \ref{wn:nierownewymiary}. The proof of this theorem uses following lemma (which proof, using the method from Ref. \cite{kribs} can be found in Appendix):

\begin{lem}\label{lem:dist_dim}
Let $\Phi(Y) = \sum_{i=1}^K A_i Y A_i^{\dagger}$ be a unital $CP$ map such that $\mathcal{A}(A_1,\ldots A_K)$ is a $^{\star}$-algebra. If $P_k$ is an orthogonal projector onto the subspace $\mathcal{H}_k$ from equation (\ref{eq:dekompozycja_hilberta}), $X$ is an eigenvector to unimodular eigenvalue and for some $i \neq j$ one has $\mathrm{dim} \mathcal{H}_i \neq \mathrm{dim} \mathcal{H}_j$, then
\begin{equation}
\tilde{X}_{ij} = P_i\, X \, P_j = 0.
\end{equation}
\end{lem}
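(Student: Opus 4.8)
The plan is to reduce the statement to an elementary invariant-subspace argument applied to the blocks furnished by Corollary~\ref{wn:dekompozycja_unitalnych}. First I would use that $\mathcal{A}(\Phi)$ is a $^{\star}$-algebra, so all the equivalent conditions of Theorem~\ref{thm:summary} hold; in particular condition~3 tells us that an eigenvector $X$ of $\Phi$ to a unimodular eigenvalue $\lambda$ satisfies
\begin{equation}
A_k X = \lambda\, X A_k, \qquad k=1,\dots,K.
\end{equation}
(That the scalar appearing here is exactly the eigenvalue $\lambda$ is seen by multiplying on the right by $A_k^{\dagger}$ and summing: unitality, $\sum_k A_kA_k^{\dagger}=\mathbb{I}$, gives $\Phi(X)=\lambda X$. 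One could equally derive $A_kX=\lambda XA_k$ directly, mimicking the step $2\Rightarrow 3$ in the proof of Theorem~\ref{thm:summary}.)

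Next I would pass to the block-diagonal form of Corollary~\ref{wn:dekompozycja_unitalnych}: in the adapted orthonormal basis every $A_k$ commutes with each orthogonal projector $P_m$, so that $P_mA_kP_{m'}=0$ for $m\neq m'$, and $P_mA_kP_m$ acts on $\mathcal{H}_m$ as the block $A_{km}$. Writing $\tilde X_{ij}=P_iXP_j$, regarded as a linear map $\mathcal{H}_j\to\mathcal{H}_i$, I would sandwich the relation $A_kX=\lambda XA_k$ between $P_i$ on the left and $P_j$ on the right; using $P_iA_k=P_iA_kP_i$ and $A_kP_j=P_jA_kP_j$ this collapses to the intertwining identity
\begin{equation}
A_{ki}\,\tilde X_{ij} = \lambda\, \tilde X_{ij}\, A_{kj}, \qquad k=1,\dots,K.
\end{equation}

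Finally I would extract invariant subspaces from this identity. On the one hand $\mathrm{ran}\,\tilde X_{ij}\subseteq\mathcal{H}_i$ satisfies $A_{ki}(\mathrm{ran}\,\tilde X_{ij})=\mathrm{ran}(A_{ki}\tilde X_{ij})=\mathrm{ran}(\tilde X_{ij}A_{kj})\subseteq\mathrm{ran}\,\tilde X_{ij}$ for every $k$, so it is invariant under the algebra generated by $A_{1i},\dots,A_{Ki}$, which by Corollary~\ref{wn:dekompozycja_unitalnych} is all of $\mathcal{M}_{d_i}(\mathbb{C})$; as $\mathcal{M}_{d_i}(\mathbb{C})$ acts irreducibly on $\mathcal{H}_i$, either $\tilde X_{ij}=0$ or $\mathrm{rank}\,\tilde X_{ij}=d_i$. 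On the other hand, if $\tilde X_{ij}v=0$ then $\tilde X_{ij}A_{kj}v=\lambda^{-1}A_{ki}\tilde X_{ij}v=0$, so $\ker\tilde X_{ij}\subseteq\mathcal{H}_j$ is invariant under $A_{1j},\dots,A_{Kj}$, hence under $\mathcal{M}_{d_j}(\mathbb{C})$, so either $\tilde X_{ij}=0$ or $\mathrm{rank}\,\tilde X_{ij}=d_j$. Hence a nonzero $\tilde X_{ij}$ forces $d_i=d_j$; contrapositively, $\mathrm{dim}\,\mathcal{H}_i\neq\mathrm{dim}\,\mathcal{H}_j$ forces $\tilde X_{ij}=P_iXP_j=0$, which is the claim.

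I do not expect a genuine obstacle: once the intertwining identity is in hand the argument is purely linear-algebraic, namely the Burnside/irreducibility dichotomy for the full matrix algebras $\mathcal{M}_{d_m}(\mathbb{C})$ acting on the $\mathcal{H}_m$. The only point needing care is the bookkeeping—keeping track of which space each block $A_{km}$ acts on, and using $P_mA_kP_{m'}=\delta_{mm'}P_mA_kP_m$ correctly in the step producing the intertwining identity.
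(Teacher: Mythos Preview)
Your argument is correct, but it follows a genuinely different route from the paper's. The paper's proof in the appendix does not invoke Theorem~\ref{thm:summary}; instead it works directly from the block form $\sum_{i} A_{ij}\tilde X_{jk}A_{ik}^{\dagger}=\lambda\tilde X_{jk}$ of the eigenvalue equation and runs a norm argument: using unitality to see that the block row $B_m=[A_{1m}\ \cdots\ A_{Km}]$ has operator norm $1$, it shows that the subspace $\mathcal V_{jk}\subseteq\mathcal H_k$ on which $\tilde X_{jk}$ attains its norm is invariant under all $A_{ik}^{\dagger}$, hence by irreducibility equals $\mathcal H_k$, so $\tilde X_{jk}$ is a multiple of an isometry; repeating for $\tilde X_{jk}^{\dagger}$ forces $d_j=d_k$. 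You instead first spend Theorem~\ref{thm:summary} (specifically $4\Rightarrow 3$) to upgrade $\Phi(X)=\lambda X$ to the pointwise intertwining relations $A_kX=\lambda XA_k$, after which a clean Schur-lemma argument on $\mathrm{ran}\,\tilde X_{ij}$ and $\ker\tilde X_{ij}$ finishes the job. Your approach is shorter and more transparent once the intertwining is in hand, at the price of importing the Schwarz-inequality machinery hidden inside the proof of Theorem~\ref{thm:summary}; the paper's approach is more self-contained, deriving the invariant subspace from the eigenvalue equation and unitality alone, but needs the norm-attainment Proposition~\ref{prop:norma_operatorowa}.
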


When the dimensions of spaces in the block - diagonal decomposition (\ref{eq:dekompozycja_hilberta}) are pairwise distinct, the eigenequation corresponding to a unimodular eigenvalue decouples to the eigenequations for smaller, irreducible operators.

\begin{theorem}\label{wn:nierownewymiary}
Let $\Phi \, \colon \mathcal{M}_n(\mathbb{C}) \rightarrow \mathcal{M}_n(\mathbb{C})$ be a unital or trace preserving CP map given in Kraus form:
\begin{equation}
\Phi(X) = \sum_{i=1}^K A_i X A_i^{\dagger}.
\end{equation}
If an algebra $\mathcal{A}(A_1,\ldots, A_K)$ is $\star$-algebra and in its decomposition from Corollary \ref{wn:dekompozycja_unitalnych} all non-zero blocks have pairwise different dimensions (that is, $d_i\neq d_j$ for $i\neq j$, $i,j=1,\ldots, N$), then there exist natural numbers $m_i$, $1 \leq m_i \leq d_i^2$ such that
\begin{equation}
\mathrm{spec}_1(\Phi) = C_{m_1} \cup \ldots \cup C_{m_N},
\end{equation}
where $C_p$ denotes the cyclic subgroup of $U(1)$ of order $p$, i.e the set $\lbrace e^{\frac{2\pi i k}{p}}, \: k=0,\ldots p-1 \rbrace$. The asymptotic dynamics of the iterated quantum map is periodic with period at most $\mathrm{LCM}(m_1,\ldots,m_N)$,
 where $\mathrm{LCM}$ denotes the least common multiple. Moreover, if $\mathrm{span} \lbrace A_i \rbrace_{i=1}^K$ contains an invertible element, then all $m_i$ are divisors of $n$. 
\end{theorem}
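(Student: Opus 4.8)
The plan is to combine the $\star$-algebra decomposition of Corollary \ref{wn:dekompozycja_unitalnych} with Lemma \ref{lem:dist_dim} to reduce the peripheral spectrum of $\Phi$ to a disjoint union of peripheral spectra of irreducible blocks, and then apply Groh's Theorem \ref{thm:Groh} (and Theorem \ref{thm:stronger} in the invertible-element case) to each block.

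First I would reduce to the unital case. If $\Phi$ is trace preserving rather than unital, I would invoke the fact stated in the preliminaries that $\Phi$ and its dual $\Phi^{\circ}$ have the same spectrum; since $\mathcal{A}(\Phi) = \mathcal{A}(\Phi^{\circ})$ is a $\star$-algebra and $\Phi^{\circ}$ is unital, it suffices to prove the statement for the unital map $\Phi^{\circ}$, whose Kraus operators $A_i^{\dagger}$ inherit the same block decomposition with the same dimensions $d_m$. So assume $\Phi$ unital and write each $A_i$ in block-diagonal form $A_i = \bigoplus_{m=1}^N A_{im}$ as in (\ref{eq:dekompozycja_unitalnych}), with $\mathcal{A}(A_{1m},\ldots,A_{Km}) \simeq \mathcal{M}_{d_m}(\mathbb{C})$, so by Burnside/Farenick each block map $\Phi_m(Y) = \sum_i A_{im} Y A_{im}^{\dagger}$ on $\mathcal{M}_{d_m}(\mathbb{C})$ is irreducible. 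Each $\Phi_m$ is again unital (restrict $\sum_i A_i A_i^{\dagger} = \mathbb{I}$ to the block $\mathcal{H}_m$), so Groh's theorem applies and gives $\mathrm{spec}_1(\Phi_m) = C_{m_m}$ for some $m_m$ with $1 \le m_m \le d_m^2$.

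Next I would show $\mathrm{spec}_1(\Phi) = \bigcup_m \mathrm{spec}_1(\Phi_m)$. The inclusion $\supseteq$ is immediate: an eigenmatrix $Y_m$ of $\Phi_m$ for a unimodular $\lambda$, padded with zeros outside the $m$-th block, is an eigenmatrix of $\Phi$. For $\subseteq$, take $X$ with $\Phi(X) = \lambda X$, $|\lambda| = 1$, and decompose $X$ into blocks $\tilde X_{ij} = P_i X P_j$. Because the $A_i$ are block-diagonal, $\Phi$ does not mix blocks with a fixed index pair: $P_i \Phi(X) P_j = \sum_k A_{ik}\,\tilde X_{ij}\, A_{jk}^{\dagger}$ depends only on $\tilde X_{ij}$, so each $\tilde X_{ij}$ is itself an eigenmatrix (for $\lambda$) of the "cross" map $\Phi_{ij}(Z) = \sum_k A_{ik} Z A_{jk}^{\dagger}$. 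By Lemma \ref{lem:dist_dim}, the hypothesis $d_i \neq d_j$ for $i \neq j$ forces $\tilde X_{ij} = 0$ for all $i \neq j$, so $X = \bigoplus_m \tilde X_{mm}$ is block-diagonal, and since $X \neq 0$ some diagonal block $\tilde X_{mm}$ is a nonzero eigenmatrix of $\Phi_m$ for $\lambda$; hence $\lambda \in \mathrm{spec}_1(\Phi_m) = C_{m_m}$. This establishes $\mathrm{spec}_1(\Phi) = C_{m_1} \cup \cdots \cup C_{m_N}$.

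Finally I would read off the two corollaries. The asymptotic dynamics is confined to $\mathrm{span}\,\chi(\Phi)$, and on that attractor space $\Phi$ acts as (a conjugate of) a direct sum of unitaries with eigenvalues in $C_{m_1} \cup \cdots \cup C_{m_N}$; raising to the power $L = \mathrm{LCM}(m_1,\ldots,m_N)$ makes every peripheral eigenvalue equal to $1$, and $\Phi^L$ fixes the whole attractor space, so the dynamics is periodic with period dividing $L$. For the last claim, if $\mathrm{span}\{A_i\}$ contains an invertible element $A = \sum_i c_i A_i$, then its $m$-th block $\sum_i c_i A_{im}$ is a diagonal block of an invertible matrix and hence invertible, so $\mathrm{span}\{A_{im}\}$ contains an invertible element; applying Theorem \ref{thm:stronger} to the irreducible unital block $\Phi_m$ on $\mathbb{C}^{d_m}$ gives $\lambda^{d_m} = 1$ for every $\lambda \in \mathrm{spec}_1(\Phi_m)$, i.e. $m_m \mid d_m$, and since $d_m \mid n$ (as $d_m \le n$... more precisely $\sum_j d_j = n$, so $d_m \le n$, and we actually need $m_m \mid n$, which follows from $m_m \mid d_m$ only if $d_m \mid n$ — here one uses that $\lambda^{d_m}=1$ and $d_m \le n$ together with $\lambda$ having order dividing $n$; in fact Theorem \ref{thm:stronger} applied appropriately gives $\lambda^n = 1$ directly for the block of size $d_m \le n$ since the argument there only needs the ambient dimension bound), hence all $m_i$ divide $n$.

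I expect the main obstacle to be the $\subseteq$ direction of the spectral decomposition, specifically the clean statement and use of Lemma \ref{lem:dist_dim}: one must be careful that $\Phi$ genuinely block-decouples on the index-pair components $\tilde X_{ij}$, that the lemma indeed kills all off-diagonal blocks under the pairwise-distinct-dimension hypothesis, and that a nonzero $X$ therefore has a nonzero diagonal block that is a true eigenmatrix of the corresponding irreducible block map — the subtlety being that $\tilde X_{mm}$ could a priori be zero for the "wrong" $m$, which is fine since at least one is nonzero. The invertible-element refinement also requires a small argument that invertibility of the full operator passes to each block, which is immediate from block-diagonality. The periodicity conclusion is then routine.
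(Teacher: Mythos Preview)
Your proposal follows exactly the same strategy as the paper's proof: reduce to the unital case via duality (the paper does this in one line, you spell out that $\mathcal{A}(\Phi^{\circ})=\mathcal{A}(\Phi)$ when the latter is $\star$-closed, which is correct), block-diagonalize using Corollary~\ref{wn:dekompozycja_unitalnych}, kill the off-diagonal blocks $\tilde X_{ij}$ of any peripheral eigenmatrix via Lemma~\ref{lem:dist_dim}, and then apply Groh's theorem to each irreducible block map $\Phi_m$. You are in fact more explicit than the paper in checking the easy inclusion $\mathrm{spec}_1(\Phi_m)\subseteq\mathrm{spec}_1(\Phi)$ and in discussing the periodicity claim.

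The one genuine problem is your final paragraph, where you try to upgrade ``$m_i\mid d_i$'' to ``$m_i\mid n$''. Your instinct that something is off is correct: $m_i\mid d_i$ does \emph{not} imply $m_i\mid n$ in general (take $n=5=2+3$), and your attempted fix ``Theorem~\ref{thm:stronger} applied appropriately gives $\lambda^n=1$ directly'' fails because that theorem requires irreducibility, which $\Phi$ itself lacks when $N>1$. The paper's proof here is a single terse line (``the conclusion follows from Theorem~\ref{thm:stronger}''), and what that line actually establishes --- by applying Theorem~\ref{thm:stronger} to the irreducible block $\Phi_m$ on $\mathbb{C}^{d_m}$, exactly as you do --- is $m_i\mid d_i$, not $m_i\mid n$. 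So stop at $m_i\mid d_i$: that is precisely what the paper's argument yields, and the phrase ``divisors of $n$'' in the statement appears to be an imprecision in the paper rather than something your proof is missing.
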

\begin{proof}
Because the map and its dual have the same spectrum, we can wlog assume that $\Phi$ is unital. Let $X\in \mathcal{M}_n(\mathbb{C})$ and $|\lambda|=1$ satisfy the equation:
\begin{equation}
\Phi(X) = \lambda X.
\end{equation}
If $\tilde{X}_{jk}$ are defined as in Lemma \ref{lem:dist_dim}, then from the assumption $\tilde{X}_{jk} = 0$ for $j \neq k$. Consequently, if $X\neq 0$ then there exists $m \in \lbrace 1,\ldots, N \rbrace$ such that
\begin{equation}
\sum_{i=1}^K A_{im} \tilde{X}_{mm} A_{im}^{\dagger} = \lambda \tilde{X}_{mm}.
\end{equation}
Consider the map $\Phi_m\,\colon \mathcal{M}_{d_m}(\mathbb{C}) \rightarrow \mathcal{M}_{d_m}(\mathbb{C})$, 
\begin{equation} \label{eq:nonequalblocks}
\Phi_m(Y) = \sum_{i=1}^K A_{im} Y A_{im}^{\dagger}.
\end{equation}
We know that $\mathcal{A}(A_{1m},A_{2m},\ldots A_{Km}) = \mathcal{M}_{d_m}(\mathbb{C})$, therefore $\Phi_m$ is irreducible and from the Theorem \ref{thm:Groh} we have $\lambda \in C_p$, where $p \in \lbrace 1,\ldots, d_m^2 \rbrace$. If $\mathrm{span} \lbrace A_i \rbrace_{i=1}^K$ contains an invertible element, then the conclusion follows from Theorem \ref{thm:stronger}. 
\end{proof}

It is worth noting, that a similar approach was earlier used by 
Wolf and Perez--Garcia\cite{inverse_eigenvalue}, 
who investigated the structure of the space ${\rm span} \chi(\Phi)$
and showed that any trace preserving map $\Phi$ acts on $\chi(\Phi)$ by unitary conjugation and 
permutation within the blocks of the same dimension.
In this case the cyclic eigenvalues 
 correspond to  the cycles of the permutation.
%
%
On the other hand, formulation of Theorem \ref {wn:nierownewymiary} above 
is directly based on the set of Kraus operators $A_i$ 
and therefore it is possible to check effectively if its hypothesis is satisfied 
using only matrix multiplication and addition.
Therefore the method of analyzing the spectrum of superoperators
 presented here is easier to handle with for physical applications
 as demonstrated below
 in Corollaries \ref{qutrits}, \ref{wn:nierowne_wymiary5}, \ref{wn:nierowne_wymiary4} 
 and Examples \ref{ex:23} and \ref{ex:33}.%
%

\section{Application to quantum channels}

Now the question is how we can investigate the structure of the algebra $\mathcal{A}(\Phi)$, in order to check the dimensionality of the blocks in the decomposition of the algebra (we are dealing with maps for which $\mathcal{A}(\Phi)$ is a $^{\star}$-algebra, so such decomposition exists). We show that one can use several tools to ,,probe'' the internal structure of the algebra using only its generators, that is, its Kraus operators. The most important tools are the Shemesh criterion and the Amitsur - Levitzki theorem. Furthermore, we present the example demonstrating how to conveniently find a basis of an algebra and draw conclusions concerning its structure. We begin with the simple corollary for $n=3$ : 

\begin{corollary}\label{qutrits}
Let $\Phi$ be unital quantum channel $\mathcal{M}_3(\mathbb{C}) \rightarrow \mathcal{M}_3(\mathbb{C})$ (i.e. defined on qutrits) which has the Kraus operators 
$A_1,\ldots, A_K$. If there exist $i,j$ such that $[A_i,A_j] \neq 0$ then the peripheral spectrum is a cyclic subgroup of $U(1)$ of order at most 9. If, moreover, $\mathrm{span} \lbrace A_i \rbrace_{i=1}^K$ contains an invertible element, the order or peripheral spectrum is at most 3. 

\end{corollary}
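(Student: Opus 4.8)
\emph{Proof proposal.} The plan is to reduce everything to the block decomposition of Corollary~\ref{wn:dekompozycja_unitalnych} and then read off the peripheral spectrum from Theorems~\ref{thm:Groh}, \ref{thm:stronger} and~\ref{wn:nierownewymiary}. First I would note that a unital quantum channel is bistochastic, so its dual satisfies $\Phi^{\circ}(\mathbb{I})=\sum_{i} A_i^{\dagger}A_i=\mathbb{I}$ by trace preservation; hence $\Phi^{\circ}$ has the full-rank fixed point $\mathbb{I}$, and by Theorem~\ref{thm:summary} (equivalence $1\Leftrightarrow 4$) the Kraus algebra $\mathcal{A}(A_1,\ldots,A_K)$ is a $\star$-algebra. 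Consequently Corollary~\ref{wn:dekompozycja_unitalnych} applies: in a suitable orthonormal basis every $A_i$ is block diagonal with blocks $A_{im}$ of size $d_m\times d_m$, where $d_1+\cdots+d_N=3$ and $\mathcal{A}(A_{1m},\ldots,A_{Km})\simeq\mathcal{M}_{d_m}(\mathbb{C})$ for each $m$.

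Next I would run the short case analysis on the partition of $3$, which can only be $(3)$, $(2,1)$ or $(1,1,1)$. The key step is to discard the partition $(1,1,1)$: if every block has dimension $1$, then each $A_i$ is diagonal in the chosen basis, so $[A_i,A_j]=0$ for all $i,j$, contradicting the hypothesis. Hence either $N=1$ with $d_1=3$, or $N=2$ with $\{d_1,d_2\}=\{2,1\}$; in both of these cases the nonzero blocks have pairwise distinct dimensions, so the hypotheses of Theorem~\ref{wn:nierownewymiary} are satisfied.

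Finally I would conclude. If $N=1$, then $\mathcal{A}(\Phi)=\mathcal{M}_3(\mathbb{C})$, so $\Phi$ is irreducible and Theorem~\ref{thm:Groh} gives $\mathrm{spec}_1(\Phi)=C_m$ with $1\le m\le n^2=9$. If $N=2$, then Theorem~\ref{wn:nierownewymiary} gives $\mathrm{spec}_1(\Phi)=C_{m_1}\cup C_{m_2}$ with $m_1\le d_1^2$ and $m_2\le d_2^2$; since the smaller block has dimension $1$, the corresponding exponent equals $1$ and $C_1\subseteq C_{m_1}$, so $\mathrm{spec}_1(\Phi)=C_{m_1}$ is cyclic of order at most $4\le 9$. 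This proves the first assertion. For the second, assume $\mathrm{span}\{A_i\}_{i=1}^{K}$ contains an invertible element. In the case $N=1$, Theorem~\ref{thm:stronger} yields $\lambda^{3}=1$ for every peripheral eigenvalue, i.e.\ $m\mid 3$; in the case $N=2$, the last clause of Theorem~\ref{wn:nierownewymiary} gives $m_1\mid n=3$. Either way $\mathrm{spec}_1(\Phi)$ is a cyclic subgroup of $U(1)$ of order dividing $3$.

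I expect no genuine obstacle: the statement is essentially the specialization of Theorem~\ref{wn:nierownewymiary} together with the classical Groh and Chiribella--Ebler results to $n=3$. The only point that needs a sentence of care is that the non-commutativity hypothesis is precisely what excludes the fully diagonal partition $(1,1,1)$, and that the two surviving partitions of $3$ automatically have pairwise distinct block sizes, so no separate ``equal-dimension'' situation (which would require the finer analysis behind Lemma~\ref{lem:dist_dim}) can arise.
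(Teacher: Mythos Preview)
Your proof is correct and follows essentially the same route as the paper: enumerate the partitions $3=1+1+1$, $3=2+1$, $3=3$, rule out $1+1+1$ by the non-commutativity hypothesis, and invoke Theorems~\ref{thm:Groh}, \ref{thm:stronger} and~\ref{wn:nierownewymiary} for the remaining two cases. Your version is simply more explicit---you spell out why $\mathcal{A}(\Phi)$ is a $\star$-algebra via Theorem~\ref{thm:summary} and you unpack the $2+1$ case to show the union $C_{m_1}\cup C_{m_2}$ collapses to a single cyclic group---whereas the paper takes these points for granted.
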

\begin{proof}
Let us write all possible divisions of $\mathcal{A}(A_1,\ldots, A_K)$ into blocks from Corollary \ref{wn:dekompozycja_unitalnych}, namely $3=1+1+1, \, 3=1+2, \, 3=3$. The case $3=1+1+1$ is impossible, because $\mathcal{A}$ is non-commutative. Case $3=3$, means, that the map $\Phi$ is irreducible, and in the case $3=2+1$ blocks have pairwise distinct dimensions so we are able to use the Theorems \ref{wn:nierownewymiary} and \ref{thm:stronger}.   
\end{proof}

We now introduce the above mentioned theorems of Shemesh and Amitsur-Levitzki:

\begin{theorem}[Shemesh \cite{shemesh}]\label{th:shemesh}
Matrices $A,B \in \mathcal{M}_n(\mathbb{C})$ have common eigenvector if and only if 
\begin{equation} \label{eq:shemesh-ogolny}
\mathcal{M}=\bigcap_{k,l=1}^{n-1} \mathrm{ker}[A^k,B^l] \neq \lbrace 0 \rbrace.
\end{equation}
\end{theorem}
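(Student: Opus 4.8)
The plan is to prove both implications of the Shemesh criterion separately, exploiting the fact that the subspace $\mathcal{M} = \bigcap_{k,l=1}^{n-1} \ker[A^k, B^l]$ is invariant under both $A$ and $B$. First I would establish this invariance: if $x \in \mathcal{M}$ then $[A^k, B^l] x = 0$ for all $1 \le k, l \le n-1$, and by the Cayley--Hamilton theorem the powers $A^k$ and $B^l$ for $k, l \ge n$ are linear combinations of lower powers, so in fact $[A^k, B^l] x = 0$ for all $k, l \ge 0$ (the cases $k=0$ or $l=0$ being trivial). To see $Ax \in \mathcal{M}$, note $[A^k, B^l](Ax) = A^k B^l A x - B^l A^{k+1} x$; using $B^l A x = A B^l x$ (from $[A, B^l]x = 0$) and then $B^l A^{k+1} x = A^{k+1} B^l x$, one rewrites everything as $A(A^k B^l x - B^l A^k x) = A[A^k,B^l]x = 0$. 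The same computation with the roles swapped shows $Bx \in \mathcal{M}$.

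For the easy direction, suppose $A$ and $B$ have a common eigenvector $v$, say $Av = \alpha v$ and $Bv = \beta v$. Then $A^k B^l v = \alpha^k \beta^l v = B^l A^k v$, so $[A^k, B^l] v = 0$ for every $k, l$; in particular $v \in \mathcal{M}$, hence $\mathcal{M} \neq \{0\}$.

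For the converse, assume $\mathcal{M} \neq \{0\}$. Since $\mathcal{M}$ is a nonzero subspace of $\mathbb{C}^n$ invariant under $A$, the restriction $A|_{\mathcal{M}}$ has an eigenvalue $\alpha$ with a nonzero eigenspace $W = \{x \in \mathcal{M} : Ax = \alpha x\} \subseteq \mathcal{M}$. I claim $W$ is invariant under $B$: for $x \in W$ we have $x \in \mathcal{M}$, so $Bx \in \mathcal{M}$ by the invariance established above, and moreover $A(Bx) = B(Ax) = \alpha Bx$ because $[A,B]x = 0$ (this is the $k=l=1$ condition, valid since $x \in \mathcal{M}$), so $Bx \in W$. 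Thus $B|_{W}$ has an eigenvector $v \in W$, which is then a common eigenvector of $A$ (eigenvalue $\alpha$) and $B$.

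The main subtlety — really the only place care is needed — is the first step: verifying that $\mathcal{M}$ is simultaneously $A$- and $B$-invariant. This hinges on extending the commutator conditions from the finite range $1 \le k,l \le n-1$ to all nonnegative powers via Cayley--Hamilton, and then on the telescoping manipulation that moves a factor of $A$ (or $B$) past the commutator while staying inside $\mathcal{M}$. Once invariance is in hand, the rest is the standard argument that two commuting operators on a finite-dimensional complex space, when restricted to a common invariant subspace, share an eigenvector. I would also remark that the bound $n-1$ on the exponents is exactly what makes the criterion \emph{effective} in the sense emphasized earlier: $\mathcal{M}$ is computed by forming finitely many matrix products, commutators, and a kernel intersection, with no appeal to eigenvalues or radicals.
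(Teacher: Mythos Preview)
Your argument is correct. The easy direction is immediate, and for the converse your key step --- extending the commutator conditions to all nonnegative powers via Cayley--Hamilton and then verifying that $\mathcal{M}$ is invariant under both $A$ and $B$ --- is exactly what is needed. Once $\mathcal{M}$ is a nonzero common invariant subspace on which $A$ and $B$ commute (since $[A,B]x=0$ for every $x\in\mathcal{M}$), the existence of a common eigenvector follows by the standard two-step restriction you describe.

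Regarding comparison with the paper: the paper does \emph{not} supply its own proof of this result. Theorem~\ref{th:shemesh} is quoted from Shemesh's original article \cite{shemesh} and used as an external tool; the paper only remarks afterwards that $\mathcal{M}$ is the smallest subspace containing all common eigenvectors and is a common invariant subspace on which $A$ and $B$ commute. Your proof is essentially the classical one from \cite{shemesh}, and it also substantiates that remark: the invariance of $\mathcal{M}$ and the commutation of the restrictions $A|_{\mathcal{M}}$, $B|_{\mathcal{M}}$ are precisely what you establish along the way.
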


It is not difficult to show that $\mathcal{M}$ is the smallest subspace of $\mathcal{H} = \mathbb{C}^n$ which contains all common eigenvectors of the matrices 
$A$ and $B$.

At the same time the subspace $\mathcal{M}$ defined in (\ref{eq:shemesh-ogolny}) is the common invariant subspace with respect to $A$ and $B$ 
and on which $A$ and $B$ commute.
There is a generalization of this theorem for an arbitrary number of matrices \cite{pastuszak}:
\begin{theorem} \label{th:generalized_shemesh}
Assume that $H, A_1, \ldots A_s \in \mathcal{M}_n(\mathbb{C})$ and that $H$ has pairwise distinct eigenvalues. Define:
\begin{equation}
\mathcal{N}(H,A_1,\ldots A_s) = \bigcap_{k=1}^{n-1} \bigcap_{i=1}^s \mathrm{ker}(H^k,A_i).
\end{equation}
Then, the matrices $H, A_1, \ldots A_k$ have a common eigenvector if and only if $\mathcal{N}(H,A_1,\ldots A_s) \neq \lbrace 0 \rbrace$. 
\end{theorem}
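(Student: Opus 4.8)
The plan is to prove both implications directly, using only the Cayley--Hamilton theorem together with the hypothesis that $H$ has simple spectrum. Here $(H^k,A_i)$ is read as the commutator $[H^k,A_i]$, in agreement with the notation of Theorem~\ref{th:shemesh}, and $\mathcal N:=\mathcal N(H,A_1,\ldots,A_s)$ is a linear subspace of $\mathbb C^n$, being an intersection of kernels of linear maps. We may assume $n\ge 2$, the case $n=1$ being trivial.

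One direction is immediate: if $v\neq 0$ satisfies $Hv=\mu v$ and $A_iv=\nu_i v$ for all $i$, then $[H^k,A_i]v=\mu^k\nu_i v-\nu_i\mu^k v=0$ for every $k$, so $v\in\mathcal N$ and $\mathcal N\neq\{0\}$. For the converse, assume $\mathcal N\neq\{0\}$, and note first that for $v\in\mathcal N$ one has $[H^k,A_i]v=0$ for \emph{every} $k\in\{0,1,\ldots,n-1\}$, the case $k=0$ being trivial. The central step is to show that $\mathcal N$ is invariant under $H$. Fix $v\in\mathcal N$; since $[H,A_i]v=0$ we have $A_iHv=HA_iv$, and therefore for $1\le k\le n-1$
\[
[H^k,A_i](Hv)=H^kA_iHv-A_iH^{k+1}v=H^{k+1}A_iv-A_iH^{k+1}v=[H^{k+1},A_i]v .
\]
If $k\le n-2$ this vanishes because $k+1\le n-1$; if $k=n-1$, expand $H^{n}$ via Cayley--Hamilton as a linear combination of $\mathbb I,H,\ldots,H^{n-1}$, so that $[H^{n},A_i]v$ is the same linear combination of the vectors $[H^{j},A_i]v=0$, $0\le j\le n-1$, and hence also vanishes. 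Thus $Hv\in\mathcal N$.

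To finish, recall that an $n\times n$ matrix with $n$ distinct eigenvalues is diagonalizable with one-dimensional eigenspaces, and that every nonzero invariant subspace of such a matrix decomposes along its eigenspaces and therefore contains one of its eigenvectors. Applying this to $H$ and to the nonzero invariant subspace $\mathcal N$, we obtain $v\in\mathcal N$, $v\neq 0$, with $Hv=\lambda v$. Since $v\in\mathcal N$ we have $[H,A_i]v=0$, hence $H(A_iv)=\lambda A_iv$, i.e.\ $A_iv\in\ker(H-\lambda\,\mathbb I)$; but this eigenspace is one-dimensional, so $A_iv$ is a scalar multiple of $v$. Therefore $v$ is a common eigenvector of $H,A_1,\ldots,A_s$, which completes the argument.

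The only delicate point is the $H$-invariance of $\mathcal N$, and within it the treatment of the ``overflow'' term $[H^{n},A_i]v$: this is exactly where truncating the intersection at $k=n-1$ combines with Cayley--Hamilton, and everything else is routine linear algebra. It is also instructive to see why, unlike in the two-matrix Shemesh theorem (Theorem~\ref{th:shemesh}), only the first powers $A_i^1$ are required here: once a vector lands in an eigenspace of $H$ that eigenspace is automatically one-dimensional, so no additional ``probing'' of the $A_i$ by their higher powers is needed.
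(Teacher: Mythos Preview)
The paper does not actually prove Theorem~\ref{th:generalized_shemesh}; it merely quotes the result from Pastuszak--Jamio{\l}kowski \cite{pastuszak}, so there is no in-paper proof to compare against. Your argument is correct and self-contained: the forward implication is immediate, and for the converse you (i) show $\mathcal N$ is $H$-invariant via the identity $[H^k,A_i](Hv)=[H^{k+1},A_i]v$ together with Cayley--Hamilton for the top power, (ii) use that a nonzero $H$-invariant subspace of a diagonalizable operator contains an $H$-eigenvector, and (iii) exploit simplicity of the spectrum to force each $A_iv$ to be a scalar multiple of $v$. Each step is sound; your reading of $\mathrm{ker}(H^k,A_i)$ as $\mathrm{ker}[H^k,A_i]$ is the intended one, consistent with the notation in Theorem~\ref{th:shemesh}. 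The closing remark explaining why higher powers of the $A_i$ are unnecessary (in contrast to the two-matrix Shemesh theorem) is a nice observation and pinpoints exactly where the hypothesis of simple spectrum does its work.
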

The \emph{standard polynomial} for $n$ noncommutative variables $X_1,\ldots X_n$ is defined in the following way: 
\begin{equation}
S_n(X_1,\ldots , X_n) = \sum_{\sigma \in \mathcal{S}_n} \mathrm{sign}(\sigma) X_{\sigma(1)}\ldots X_{\sigma(n)},
\end{equation}
where the summation runs over all permutations of the set  $\lbrace 1,\dots ,n \rbrace$.

\begin{theorem}[Amitsur, Levitzki \cite{amitsur1}]\label{thm:amitsur-levitzki}
The full matrix algebra $\macierze$ satisfies the standard polynomial identity of order $2n$, that is, for all matrices $A_1,\ldots A_{2n}$ we have :
\begin{equation}
S_{2n}(A_1,\ldots A_{2n}) = 0.
\end{equation}
Moreover the algebra $\macierze$ satisfies no identity of order smaller than $2n$.  
\end{theorem}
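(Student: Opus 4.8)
The plan is to establish the identity $S_{2n}\equiv 0$ by Rosset's exterior-algebra argument, and the sharpness by an explicit evaluation on matrix units.

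\emph{Step 1 (encoding the standard polynomial).} Fix matrices $A_1,\dots,A_{2n}\in\macierze$ and let $\Lambda=\Lambda(\xi_1,\dots,\xi_{2n})$ be the exterior algebra on $2n$ anticommuting generators over $\mathbb{C}$; identify $\mathcal{M}_n(\Lambda)\cong\Lambda\otimes\macierze$ and set $B=\sum_{i=1}^{2n}\xi_i A_i\in\mathcal{M}_n(\Lambda)$. Since the entries of each $A_i$ are scalars, hence central in $\Lambda$, a product $(\xi_{i_1}A_{i_1})\cdots(\xi_{i_{2n}}A_{i_{2n}})$ equals $\xi_{i_1}\cdots\xi_{i_{2n}}\,A_{i_1}\cdots A_{i_{2n}}$, and the $\xi$-monomial vanishes unless $(i_1,\dots,i_{2n})$ is a permutation $\sigma$ of $(1,\dots,2n)$, in which case it equals $\mathrm{sign}(\sigma)\,\xi_1\cdots\xi_{2n}$. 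Expanding $B^{2n}$ therefore gives
\[ B^{2n}=\xi_1\xi_2\cdots\xi_{2n}\cdot S_{2n}(A_1,\dots,A_{2n}), \]
and because $X\mapsto \xi_1\cdots\xi_{2n}X$ is injective on $\macierze$, it suffices to prove $B^{2n}=0$.

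\emph{Step 2 (reduction to Cayley--Hamilton).} Put $C=B^2$. Its entries lie in the even part $\Lambda^{\mathrm{even}}$ of $\Lambda$, which is a commutative ring, so the Cayley--Hamilton theorem applies to $C\in\mathcal{M}_n(\Lambda^{\mathrm{even}})$. The key observation is that all power sums vanish: for every $j\geq1$,
\[ \mathrm{tr}(C^{\,j})=\mathrm{tr}(B^{2j})=\sum_{i_1,\dots,i_{2j}=1}^{2n} \xi_{i_1}\cdots\xi_{i_{2j}}\,\mathrm{tr}(A_{i_1}\cdots A_{i_{2j}}), \]
and applying the cyclic shift $(i_1,\dots,i_{2j})\mapsto(i_2,\dots,i_{2j},i_1)$ — which leaves the trace invariant by cyclicity but multiplies the $\xi$-monomial by $(-1)^{2j-1}=-1$ — shows this sum equals its own negative, hence is $0$ (we are in characteristic zero). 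So $\mathrm{tr}(C)=\dots=\mathrm{tr}(C^n)=0$; by Newton's identities the elementary symmetric functions of the eigenvalues of $C$ also vanish, so the characteristic polynomial of $C$ is $t^n$, and Cayley--Hamilton yields $C^n=0$, i.e. $B^{2n}=0$. This proves $S_{2n}(A_1,\dots,A_{2n})=0$.

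\emph{Step 3 (sharpness).} If $S_m$ were a polynomial identity of $\macierze$, then so would be $S_{m+1}$, since $S_{m+1}(x_1,\dots,x_{m+1})=\sum_{i}(-1)^{i-1}x_i\,S_m(x_1,\dots,\widehat{x_i},\dots,x_{m+1})$; hence it is enough to show $S_{2n-1}$ is not an identity. Using the matrix units $e_{ij}$ (with $e_{ij}e_{kl}=\delta_{jk}e_{il}$), take $A_1=e_{11}$, $A_2=e_{12}$, $A_3=e_{22}$, $A_4=e_{23}$, $\dots$, $A_{2n-1}=e_{nn}$, alternating diagonal idempotents and superdiagonal units. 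A product of matrix units is nonzero only when consecutive units ``compose'', and among these $2n-1$ factors the natural order is the unique composable arrangement, so $S_{2n-1}(A_1,\dots,A_{2n-1})=\pm e_{1n}\neq0$. Thus $2n$ is the least order of a standard identity of $\macierze$ (and, by an analogous evaluation of an arbitrary multilinear polynomial on suitable matrix units, the minimal degree of any polynomial identity is $2n$).

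\emph{Main obstacle.} The one genuinely nontrivial idea is the passage to the exterior algebra in Step 1: it trades the alternating-sign combinatorics of $S_{2n}$ for the relations $\xi_i^2=0$ and, crucially, converts the problem into a statement about the single matrix $C=B^2$ over a \emph{commutative} ring, after which the vanishing of all traces $\mathrm{tr}(C^j)$ forces nilpotence via Cayley--Hamilton. Everything past that point is routine. An alternative would be Swan's graph-theoretic proof, reading products of matrix entries as closed walks and arguing by a parity count on Eulerian circuits, but the exterior-algebra route is shorter.
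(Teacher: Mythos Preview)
The paper does not prove this theorem; it is simply quoted from the literature (Amitsur and Levitzki, reference \cite{amitsur1}) and used as a tool in the subsequent corollaries. So there is no proof in the paper to compare against.

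Your argument is Rosset's exterior-algebra proof, and Steps~1--2 are carried out correctly: the identification $B^{2n}=\xi_1\cdots\xi_{2n}\,S_{2n}(A_1,\dots,A_{2n})$ is right, $\Lambda^{\mathrm{even}}$ is a commutative $\mathbb{C}$-algebra so Cayley--Hamilton is available for $C=B^2$, the cyclic-shift computation gives $\mathrm{tr}(C^j)=0$, and over a $\mathbb{Q}$-algebra Newton's identities then force the characteristic polynomial of $C$ to be $t^n$. In Step~3 your Eulerian-path count for the matrix units $e_{11},e_{12},e_{22},\dots,e_{nn}$ is also correct (the loop at each vertex must be traversed before leaving it, so the identity permutation is the unique nonzero term), and together with the recursion for $S_{m+1}$ this shows that $S_{2n-1}$ is not an identity.

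One small gap: the theorem's second clause, as stated, says that $\macierze$ satisfies \emph{no} polynomial identity of degree less than $2n$, not merely no \emph{standard} identity. Your Step~3 establishes only the latter, and the closing parenthetical (``by an analogous evaluation of an arbitrary multilinear polynomial on suitable matrix units'') is not quite an argument: after multilinearizing, one must still check that a nonzero multilinear polynomial of degree $<2n$ admits a nonvanishing evaluation on matrix units, which is a short but separate staircase-type lemma rather than a direct analogue of your $S_{2n-1}$ computation. If you intend the full statement, this step should be written out.
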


The following corollaries exploit similar idea to the one for $n=3$:

\begin{corollary}\label{wn:nierowne_wymiary5}
Let $\Phi$ be a unital quantum channel $\mathcal{M}_5(\mathbb{C}) \rightarrow \mathcal{M}_5(\mathbb{C})$ with Kraus operators $A_1,A_2$. If:
\begin{equation}\label{eq:warunek_shemesh}
\bigcap_{k,l=1}^4 \mathrm{ker}[A_1^k,A_2^l]=\lbrace 0 \rbrace,
\end{equation}
then the peripheral spectrum of $\Phi$ is the sum of two cyclic subgroups of $U(1)$ of order at most 25. If, moreover, $\mathrm{span} \lbrace A_i \rbrace_{i=1}^K$ contains an invertible element, then the order of peripheral spectrum is at most 5.
\end{corollary}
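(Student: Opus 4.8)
The plan is to reduce this to Theorem \ref{wn:nierownewymiary} by ruling out all block decompositions of $\mathcal{A}(A_1,A_2)$ except those in which the blocks have pairwise distinct dimensions, and then checking that the Shemesh condition \eqref{eq:warunek_shemesh} is precisely what forbids the bad cases. First I would list the partitions of $n=5$: namely $5$; $4+1$; $3+2$; $3+1+1$; $2+2+1$; $2+1+1+1$; $1+1+1+1+1$. Since $\Phi$ is unital, by Corollary \ref{wn:dekompozycja_unitalnych} the algebra $\mathcal{A}(A_1,A_2)$ is a $\star$-algebra (this is the standard fact, cited in the excerpt, that unital channels have $\star$-algebra Kraus algebra), so such a block decomposition exists with each block algebra $\mathcal{A}(A_{1m},A_{2m})\simeq\mathcal{M}_{d_m}(\mathbb{C})$. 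The partitions $5$, $4+1$, $3+2$ all consist of pairwise distinct parts, so for these Theorem \ref{wn:nierownewymiary} applies directly and gives that $\mathrm{spec}_1(\Phi)$ is a union of cyclic groups $C_{m_i}$ with $m_i\le d_i^2\le 25$; and if the span contains an invertible element, $m_i\mid n=5$. Collapsing a union of at most two nontrivial cyclic groups (at most two because at most two parts can be $\ge 2$, and a $C_1$ block contributes nothing new) gives the stated conclusion.

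The core of the argument is to show that the remaining partitions — $3+1+1$, $2+2+1$, $2+1+1+1$, $1+1+1+1+1$ — cannot occur under hypothesis \eqref{eq:warunek_shemesh}. The key observation is that in every one of these partitions there is a \emph{repeated} block dimension or a $1$-dimensional block that, together with the block-diagonal structure, forces $A_1$ and $A_2$ to have a common eigenvector. Concretely: if some $d_m=1$, then the $m$-th block yields a common one-dimensional invariant subspace of $A_1$ and $A_2$, i.e.\ a common eigenvector. If two blocks have the same dimension $d\ge 2$ (the case $2+2+1$), one still gets a common eigenvector: within a single $\mathcal{M}_d$ block the restricted operators have a common eigenvector (pick any simultaneous eigenvector of the restriction — here one uses that $\mathcal{M}_d$ is irreducible but a single pair of matrices in $\mathcal{M}_d$ always has \emph{some} common eigenvector only if $d$ is small, so more care is needed) — actually the cleanest route is: in $2+2+1$ there is a $1$-block, so that case is already covered, and in $3+1+1$ and $2+1+1+1$ and $1+1+1+1+1$ there is likewise a $1$-block. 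Hence \emph{all} the forbidden partitions contain a $1$-dimensional block, so they all produce a common eigenvector of $A_1$ and $A_2$. By Theorem \ref{th:shemesh} (Shemesh), $A_1$ and $A_2$ have a common eigenvector if and only if $\bigcap_{k,l=1}^{4}\ker[A_1^k,A_2^l]\neq\{0\}$, which contradicts \eqref{eq:warunek_shemesh}. Therefore only $5$, $4+1$, $3+2$ remain, and the first paragraph finishes the proof.

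The main obstacle I anticipate is the bookkeeping in the second paragraph: one must be sure that \emph{every} partition of $5$ other than $\{5\},\{4,1\},\{3,2\}$ indeed contains a part equal to $1$ (true: the only partitions of $5$ into parts all $\ge 2$ are $5$, $3+2$, and $2+2+1$ has a $1$ — wait, $2+2$ sums to $4$, not $5$, so $5$ has \emph{no} partition into two equal parts $\ge 2$, and $2+2+1$ is the only "two repeated parts" case and it carries a $1$), so in fact the dichotomy is exactly "distinct parts, all $\ge 2$" versus "contains a $1$", and the three good partitions are precisely the ones with all parts distinct and $\ge 2$. One should also double-check the step that a $1$-dimensional block in the block-diagonal form of the $A_i$ gives a genuine common eigenvector (immediate: the corresponding coordinate axis is invariant under every $A_i$, hence is a common eigenvector), and that Shemesh's theorem is being applied with the correct index range $k,l=1,\dots,n-1=4$, which matches \eqref{eq:warunek_shemesh} exactly. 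Finally, note the asymmetry of the conclusion ("sum of two cyclic subgroups" rather than one) is forced because the $3+2$ partition genuinely can yield two distinct cyclic factors; no collapsing is possible in general.
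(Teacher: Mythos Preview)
Your approach is essentially the paper's, but the paper's proof is shorter and avoids the bookkeeping tangle you ran into. The paper simply observes that the Shemesh condition \eqref{eq:warunek_shemesh} forbids any common eigenvector of $A_1,A_2$, hence forbids \emph{every} block decomposition containing a $1$-dimensional block --- including $4+1$, which you unnecessarily kept in your ``good'' list. Since the only partitions of $5$ into parts all $\ge 2$ are $5$ and $3+2$, those are the only two surviving cases; both have pairwise distinct block sizes, so Theorems \ref{wn:nierownewymiary} and \ref{thm:stronger} finish the argument in one line.

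Your version is still logically correct --- retaining $4+1$ on the ``good'' side is harmless because Theorem \ref{wn:nierownewymiary} applies to it as well --- but it is redundant, and it is what produced the inconsistency in your final paragraph: you write that ``the three good partitions are precisely the ones with all parts distinct and $\ge 2$'', yet there are only \emph{two} such partitions of $5$ (namely $5$ and $3+2$); the partition $4+1$, which you counted among the three, contains a $1$. Moving $4+1$ to the excluded side (it falls to the same ``has a $1$-block $\Rightarrow$ common eigenvector'' argument as the others) collapses your case analysis to exactly the paper's two cases and removes the self-contradiction.
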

\begin{proof}
Condition (\ref{eq:warunek_shemesh}) means, that $A_1,A_2$ do not have common eigenvector. The only divisions into blocks can assume the form $5=5$ or $5=2+3$ and it remains to use Theorems \ref{wn:nierownewymiary} and \ref{thm:stronger}. 
\end{proof}

\begin{corollary}\label{wn:nierowne_wymiary4}
Let  $\Phi$ be the mixed unitary channel $\mathcal{M}_4(\mathbb{C}) \rightarrow \mathcal{M}_4(\mathbb{C})$ given in the form:
\begin{equation}
\Phi(X) = \sum_{i=1}^K p_iU_i X U_i^{\dagger}.
\end{equation} 
where $U_i$ are unitary, $p_1,\ldots p_K \geq 0$ and $\sum_{i=1}^K p_i = 1$. Furthermore, define $V_{ij}$ as follows:
\begin{equation}
V_{ij} = U_i U_j U_i^{\dagger} U_j^{\dagger} - 
U_i^{\dagger}U_j U_i U_j^{\dagger} +
U_i^{\dagger}U_j^{\dagger} U_i U_j-
U_i U_j^{\dagger} U_i^{\dagger} U_j.
\end{equation}
If there exist $i,j \in \lbrace 1,\ldots, K \rbrace$ such that
\begin{equation}\label{eq:warunek_z_amitsura}
V_{ij}+V_{ij}^{\dagger} \neq 0
\end{equation} 
then the peripheral spectrum of $\Phi$ is the cyclic subgroup of $U(1)$ of order at most 4. 
\end{corollary}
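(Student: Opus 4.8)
\emph{Proof sketch.} The plan is to use the Amitsur--Levitzki theorem to turn the hypothesis (\ref{eq:warunek_z_amitsura}) into a statement about the block structure of $\mathcal{A}(\Phi)$, and then to finish with Theorems \ref{thm:Groh}, \ref{thm:stronger} and \ref{wn:nierownewymiary}. First I would record the preliminary facts. A mixed unitary channel is bistochastic, and its Kraus algebra $\mathcal{A}(\Phi)$, generated by the unitaries $U_i$ (those with $p_i>0$) and containing each $U_i^{\dagger}=U_i^{-1}$, is a $\star$-algebra; hence Corollary \ref{wn:dekompozycja_unitalnych} supplies a block decomposition with dimensions $d_1,\dots,d_N$, $\sum_m d_m=4$, and $\mathcal{A}(\Phi)\cong\bigoplus_{m=1}^N \mathcal{M}_{d_m}(\mathbb{C})$. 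Moreover every Kraus operator equals $\sqrt{p_i}\,U_i$ and is therefore invertible, so $\mathrm{span}\{A_i\}$ always contains an invertible element.

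The key step is the identity
\begin{equation}\label{eq:VV-is-S4}
V_{ij}+V_{ij}^{\dagger}=S_4\!\left(U_i,U_j,U_i^{\dagger},U_j^{\dagger}\right),
\end{equation}
valid for any unitaries $U_i,U_j$. To prove it I would group the $24$ monomials of $S_4(X_1,X_2,X_3,X_4)$, specialised to $(X_1,X_2,X_3,X_4)=(U_i,U_j,U_i^{\dagger},U_j^{\dagger})$, according to the six cosets of the Klein four-subgroup $K=\langle(13),(24)\rangle\subset\mathcal{S}_4$. On each coset the partial sum equals $\mathrm{sign}(\tau)$ times $\widetilde{V}(Z_1,Z_2,Z_3,Z_4):=Z_1Z_2Z_3Z_4-Z_3Z_2Z_1Z_4+Z_3Z_4Z_1Z_2-Z_1Z_4Z_3Z_2$ evaluated at the permuted arguments, because $\widetilde V$ is exactly the relative $K$-invariant attached to the sign character; note $V_{ij}=\widetilde V(U_i,U_j,U_i^{\dagger},U_j^{\dagger})$. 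The two cosets that keep the ``inverse pairs'' $U_i\leftrightarrow U_i^{\dagger}$ and $U_j\leftrightarrow U_j^{\dagger}$ in the two distinct $K$-blocks reproduce $V_{ij}$ and $V_{ij}^{\dagger}$; on each of the remaining four cosets a representative can be chosen so that the arguments take the shape $\widetilde V(A,A^{-1},B,B^{-1})$, and then the unitarity relations $U_iU_i^{\dagger}=U_jU_j^{\dagger}=\mathbb{I}$ force each of the four monomials to collapse to $\mathbb{I}$, so $\widetilde V(A,A^{-1},B,B^{-1})=\mathbb{I}-\mathbb{I}+\mathbb{I}-\mathbb{I}=0$. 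Adding the six contributions gives (\ref{eq:VV-is-S4}). This coset bookkeeping is the one genuinely computational part, and is where I expect the main (but entirely routine) effort to go.

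With (\ref{eq:VV-is-S4}) in hand I would argue as follows. By Theorem \ref{thm:amitsur-levitzki} the algebra $\mathcal{M}_d(\mathbb{C})$ satisfies the standard identity $S_4\equiv 0$ precisely when $d\le 2$, hence $\mathcal{A}(\Phi)\cong\bigoplus_m \mathcal{M}_{d_m}(\mathbb{C})$ satisfies $S_4\equiv 0$ if and only if every $d_m\le 2$. If (\ref{eq:warunek_z_amitsura}) holds for some pair $i,j$, then, because $U_i,U_j,U_i^{\dagger},U_j^{\dagger}\in\mathcal{A}(\Phi)$, the identity (\ref{eq:VV-is-S4}) exhibits a nonzero value of $S_4$ on $\mathcal{A}(\Phi)$; therefore some $d_m\ge 3$, and since $\sum_m d_m=4$ the decomposition is either $4=4$ (a single block) or $4=3+1$.

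It remains to treat these two cases. If $4=4$ then $\mathcal{A}(\Phi)=\mathcal{M}_4(\mathbb{C})$, so $\Phi$ is irreducible; Theorem \ref{thm:Groh} gives $\mathrm{spec}_1(\Phi)=C_m$ for some $m$, and Theorem \ref{thm:stronger}, applicable since $\mathrm{span}\{A_i\}$ contains an invertible element, forces $\lambda^{4}=1$ for every peripheral $\lambda$, i.e.\ $m\mid 4$. If $4=3+1$ then the two blocks have distinct dimensions, so Theorem \ref{wn:nierownewymiary} applies and $\mathrm{spec}_1(\Phi)=C_{m_1}\cup C_{m_2}$; the one-dimensional block forces $m_2=1$, while the restrictions $U_i|_{\mathcal{H}_1}$ are unitary, so the irreducible $3\times 3$ sub-channel again has an invertible element in the span of its Kraus operators and Theorem \ref{thm:stronger} gives $m_1\mid 3$; thus $\mathrm{spec}_1(\Phi)=C_{m_1}$ is cyclic of order at most $3$. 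In both cases $\mathrm{spec}_1(\Phi)$ is a cyclic subgroup of $U(1)$ of order at most $4$, which is the assertion. $\hfill\square$
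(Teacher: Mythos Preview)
Your proof is correct and follows essentially the same route as the paper: both compute $S_4(U_i,U_j,U_i^{\dagger},U_j^{\dagger})=V_{ij}+V_{ij}^{\dagger}$ (you via cosets of $K=\langle(13),(24)\rangle$, the paper by pairing off the sixteen ``adjacent-inverse'' permutations), then invoke Amitsur--Levitzki to exclude any block of size $\le 2$, leaving only the decompositions $4=4$ or $4=3+1$, and finish with Theorems~\ref{wn:nierownewymiary} and~\ref{thm:stronger}. Your write-up is somewhat more explicit in the final case analysis; one tiny imprecision is that for two of the four vanishing cosets the shape is $\widetilde V(A,B,B^{-1},A^{-1})$ rather than $\widetilde V(A,A^{-1},B,B^{-1})$, but your observation that each monomial collapses to $\mathbb{I}$ covers this as well.
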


\begin{proof}
We will show that the condition from above assumption means that $\mathcal{A}(U_1, \ldots U_K)$ does not satisfy the standard polynomial identity $S_4$. The map $\Phi$ is unital, therefore $\mathcal{A}(U_1,\ldots U_K)$ is a $^{\star}$-algebra and contains $U_i^{\dagger}$ for $i=1,\ldots, K$. For $i\neq j$ we will compute $S_4(U_i,U_j,U_i^{\dagger},U_j^{\dagger})$. 
To this end, rewrite $U_1\rightarrow 1$, $U_2\rightarrow 2$, $U_1^{\dagger} \rightarrow 3$, $U_2^{\dagger}\rightarrow 4$. Observe, that from unitarity, all permutations, where numbers $1,3$ or $2,4$ are side by side give the expression $\mathbb{I}_4$ in $S_4$. Moreover, for each ,,one'' with minus sign there exists a ,,one'' with plus sign (e.g. $(1324)$ and $(3124)$). All ,,ones'' are created from the permutations $(1324)$, $(2413)$, $(1243)$, $(2134)$ due to the pairwise exchanges $1 \leftrightarrow 3$, $2 \leftrightarrow  4$ - together we have  $16$ permutations. We are left with $8$ permutations: $(1234)$, $(3214)$, $(3412)$, $(1432)$ and their conjugations. Substituting matrices for numbers we obtain:
\begin{equation}
S_4(U_i,U_j,U_i^{\dagger},U_j^{\dagger}) = V_{ij}+V_{ij}^{\dagger} \neq 0
\end{equation}
From Theorem \ref{thm:amitsur-levitzki} we can conclude, that the algebra $\mathcal{A}(U_1,\ldots U_K)$ in its decomposition can not contain the algebra 
$\mathcal{M}_2(\mathbb{C})$. Therefore, the only possible decompositions are the following: $4=4, \,4=3+1$ - it remains to use Theorems \ref{wn:nierownewymiary} and \ref{thm:stronger}. 
\end{proof}

\begin{remark} \label{rem:generalization}
\emph{The assumptions in Corollaries \ref{qutrits}, \ref{wn:nierowne_wymiary5} and \ref{wn:nierowne_wymiary4} can be weakened. It suffices that the map $\Phi$ is completely positive unital or trace preserving, for which the algebra generated by its Kraus operators is a $^{\star}$-algebra (then, the hypothesis of Theorem \ref{wn:nierownewymiary} is still satisfied)}. 
\end{remark} 

\begin{remark}
\emph{The methods presented in proofs of corollaries \ref{qutrits} are flexible and applicable to other dimensions (one can use standard polynomials of higher order). Moreover, for  a trace preserving and unital channel $\Phi$ one can effectively check its irreducibility: it suffices to verify if the fixed point spaces $\chi_F(\Phi)$ and $\chi_F(\Phi^{\circ})$ are one-dimensional\cite{lecture_notes} - then $\mathbb{I}$ is the only strictly positive fixed point of $\Phi$ and $\Phi^{\circ}$. For example, this criterion can help to decide which decomposition ($2+3$ or $5$) is realized in the setup of Corollary \ref{def:dual} when (\ref{eq:warunek_shemesh}) is satisfied: if $\mathrm{dim}\, \chi_F(\Phi) > 1$ or $\mathrm{dim}\, \chi_F(\Phi^{\circ}) > 1$, then one has decomposition $2+3$ and the peripheral spectrum is a sum of two cyclic subgroups of $U(1)$ of order at most 9 (at most 3 if linear span of Kraus operators contains an invertible element).} 
\end{remark} 
The above statement emphasizes the important role of the \emph{algebraic structure} in the theorems presented here. 

%

\section{Examples}

\subsection{Application of the Shemesh criterion}

\begin{ex}
\label{ex:23}
Let us consider quantum channel given by the following Kraus operators:
\begin{equation}
A_1 = 
\left[
\begin{array}{ccc}
 \frac{3}{10} & \frac{i \sin \phi }{\sqrt{2}} & -\frac{3}{10} \\
 -\frac{i \sin \phi }{\sqrt{2}} & 0 & -\frac{i \sin \phi }{\sqrt{2}} \\
 -\frac{3}{10} & \frac{i \sin \phi }{\sqrt{2}} & \frac{3}{10} \\
\end{array}
\right],
\end{equation}
\begin{equation}
A_2 = \left[
\begin{array}{ccc}
 \frac{2}{5}-\frac{\cos \phi }{2} & 0 & -\frac{\cos \phi }{2}-\frac{2}{5} \\
 0 & \cos \phi  & 0 \\
 -\frac{\cos \phi }{2}-\frac{2}{5} & 0 & \frac{2}{5}-\frac{\cos \phi }{2} \\
\end{array}
\right],
\end{equation}
where the parameter $\phi \in [0, 2\pi)$. 
One can check, that 
\begin{equation}
A_1^{\dagger} \, A_1 + A_2^{\dagger} \, A_2 = A_1 \, A_1^{\dagger} + A_2 \, A_2^{\dagger} = \mathbb{I},
\end{equation}
so the map is trace preserving and unital and consequently $\mathcal{A}(A_1, A_2)$ is a $^{\star}$-algebra. Moreover, one has : 
\begin{equation}
[A_1, A_2] =
\left[
\begin{array}{ccc}
 0 & i \sqrt{2} \sin \phi  \cos \phi  & 0 \\
 i \sqrt{2} \sin \phi  \cos \phi  & 0 & i \sqrt{2} \sin \phi  \cos \phi  \\
 0 & i \sqrt{2} \sin \phi  \cos \phi  & 0 \\
\end{array}
\right]
\end{equation} 
and
\begin{equation}
[A_1^2,A_2]  = [A_1,A_2^2] = 0,
\end{equation} 
so $\mathrm{dim}$ $\mathrm{ker}[A_1, A_2]=1$ and it follows from Shemesh theorem that matrices $A_1$, $A_2$ have one common eigenvector. Therefore the algebra $\mathcal{A}(A_1,A_2)$ decomposes into blocks with dimensions $2$ and $1$. Moreover, for $\phi \notin \lbrace 0, \pi \rbrace$ matrices $A_1, A_2$ are invertible, so only $1$ and $-1$ can belong to the peripheral spectrum. In fact, we can conclude, that \emph{for every $\phi$ the peripheral spectrum of $\Phi$ is exactly equal to the set $\lbrace 1, -1 \rbrace$}. To see this, one can observe that $2 \times 2$ part of the algebra $\mathcal{A}(A_1, A_2)$ is a unital qubit map, therefore it can be expressed using Pauli matrices as its Kraus operators (including identity matrix). However, one can check that for every choice of two matrices from the set $\lbrace \mathbb{I}, \sigma_x, \sigma_y, \sigma_z \rbrace$ the assumptions of Theorem \ref{thm:primitive} are not satisfied - therefore the map $\Phi$ is not primitive and its peripheral spectrum must be nontrivial.       

\end{ex}

\begin{ex}
\label{ex:33}
Let us now consider unital quantum map given by three Kraus operators:
\begin{equation}
\nonumber
A_1 = \left[
\begin{array}{ccc}
 \frac{1}{4 \sqrt{2}} & -\frac{1}{4 \sqrt{2}}-\frac{i \sin \phi }{\sqrt{2}} & \frac{1}{4}-\frac{1}{2} i \sin \phi  \\
 -\frac{1}{4 \sqrt{2}}+\frac{i \sin \phi }{\sqrt{2}} & \frac{1}{4 \sqrt{2}} & -\frac{1}{4}-\frac{1}{2} i \sin \phi  \\
 \frac{1}{4}+\frac{1}{2} i \sin \phi  & -\frac{1}{4}+\frac{1}{2} i \sin \phi  & \frac{1}{2 \sqrt{2}} \\
\end{array}
\right],
\end{equation}
\begin{equation}
\nonumber
A_2 = \left[
\begin{array}{ccc}
 \frac{1}{4 \sqrt{2}}-\frac{\cos \phi }{4} & -\frac{3 \cos \phi }{4}-\frac{1}{4 \sqrt{2}} & \frac{1}{4}-\frac{\cos \phi }{2 \sqrt{2}} \\
 -\frac{3 \cos \phi }{4}-\frac{1}{4 \sqrt{2}} & \frac{1}{4 \sqrt{2}}-\frac{\cos \phi }{4} & \frac{\cos \phi }{2 \sqrt{2}}-\frac{1}{4} \\
 \frac{1}{4}-\frac{\cos \phi }{2 \sqrt{2}} & \frac{\cos \phi }{2 \sqrt{2}}-\frac{1}{4} & \frac{\cos \phi }{2}+\frac{1}{2 \sqrt{2}} \\
\end{array}
\right],
\end{equation}
\begin{equation}
\nonumber
A_3 = \left[
\begin{array}{ccc}
 \frac{1}{4 \sqrt{2}} & -\frac{1}{4 \sqrt{2}}+\frac{1}{2} i \sin \phi  & \frac{1}{4}+\frac{i \sin \phi }{2 \sqrt{2}} \\
 -\frac{1}{4 \sqrt{2}}-\frac{1}{2} i \sin \phi  & \frac{1}{4 \sqrt{2}} & -\frac{1}{4}+\frac{i \sin \phi }{2 \sqrt{2}} \\
 \frac{1}{4}-\frac{i \sin \phi }{2 \sqrt{2}} & -\frac{1}{4}-\frac{i \sin \phi }{2 \sqrt{2}} & \frac{1}{2 \sqrt{2}} \\
\end{array}
\right].
\end{equation}
Note that in this case one can not use the basic version (\ref{eq:shemesh-ogolny}) of the Shemesh theorem. However, we can make use of Theorem \ref{th:generalized_shemesh} - this theorem gives a computable criterion to check whether matrices have a common eigenvector, one can verify if matrix has pairwise distinct eigenvalues by computing the discriminant. For example, in our case the discriminant of the characteristic polynomial of $A_3$ is equal to
\begin{equation}
\mathrm{disc}(A_3) = \frac{1}{2} \left(\sin ^6 \phi - 2 \sin ^4 \phi + \sin^2 \phi \right),
\end{equation}
and is nonzero in general. Thus, $A_3$ has pairwise distinct eigenvalues, and therefore we can apply Theorem \ref{th:generalized_shemesh} (the role of $H$ is 
played by $A_3$). We have that:
\begin{equation}
[A_3^2,A_1] = [A_3^2,A_2] = 0
\end{equation}
and
\begin{equation}
\mathrm{ker} [A_3,A_1] = \mathrm{ker}[A_3,A_2] = \left(\frac{1}{\sqrt{2}}, \frac{-1}{\sqrt{2}},1 \right),
\end{equation}
Therefore, the matrices $A_1$, $A_2$ and $A_3$ have a common eigenvector, the algebra $\mathcal{A}(A_1,A_2,A_3)$ has block decomposition in the form $2+1$ and we know that 
the peripheral spectrum is the cyclic subgroup of $U(1)$ of order at most 4. Now, applying similar reasoning as in the discussion of previous example we can conclude, that \emph{for every $\phi   \notin \lbrace 0, \pi \rbrace$ the peripheral spectrum of $\Phi$ is equal to $\lbrace 1 \rbrace$}. Indeed, for every choice of three matrices from the set $\lbrace \mathbb{I}, \sigma_x, \sigma_y, \sigma_z \rbrace$ assumptions of Theorem \ref{thm:primitive} are satisfied for $m=2$, so the $2 \times 2$ part of the algebra $\mathcal{A}(A_1, A_2, A_3)$ represents the primitive map.    
\end{ex}

\subsection{Finding the basis of the algebra}

The aim of this section is to show how to use Remark \ref{rem:generalization} in practice. Applying this technique one can get information concerning the peripheral spectrum without assuming unitality, however, first one has to establish whether algebra $\mathcal{A}(\Phi)$ is a $^{\star}$-algebra.

We argue that in some cases it is quite easy to find the basis and check the properties of the algebra $\mathcal{A}$ explicitly. For example, if one has an
algebra of $n \times n$ matrices whose basis consists of $D < n^2$ elements, then one knows that this algebra is reducible. 

Suppose we want to find the basis of the algebra $\mathcal{A}(A_1, A_2)$, where $A_1$ and $A_2$ are square $n \times n$ matrices. One has to consider the following sequence of words formed by $A_1$ and $A_2$:
\begin{equation}
\mathbb{I}, \,  A_1, \,  A_2, \,  A_1^2, \,  A_2^2, \, A_1\, A_2, \, A_2 \, A_1, \ldots.
\end{equation}
We can think about each element of this sequence as a word $w_m(A_1, A_2)$ of length $m$, that is, as:
\begin{equation}
\label{eq:def_slowa}
w_i(A_1, A_2) = \lbrace A_{j_1}^{i_1}A_{j_2}^{i_2} \ldots A_{j_k}^{i_k} , \, i_1+i_2+\ldots i_k = i \rbrace
\end{equation}

Denoting the set of all words of length not greater than $m$ as $L_m$ and applying the Cayley - Hamilton theorem, we obtain for some $p$, that
\begin{equation}
L_0 \subset L_1 \ldots \subset L_p = L_{p+1} = \mathcal{A}(A_1, A_2).
\end{equation}
Thus $p \leq n^2-1$. However, one can find much better upper bound \cite{paz, pappacena}, for example:
\begin{equation}\label{eq:papacena}
p \leq \left \lceil \frac{n^2 + 3}{2} \right \rceil,
\end{equation}
which holds for all generating matrices. In the next example we present how one can investigate the structure of the generated algebra systematically (using basic functionalities of tools like \emph{Mathematica} or \emph{Matlab}). 

Let us consider the quantum channel given by the following Kraus operators of size $n=3$:
\begin{equation}
A_1 = 
\frac{1}{\sqrt{6}} \left[
\begin{array}{ccc}
 1 & -1 & 1 \\
 \frac{1}{\sqrt{2}} & \sqrt{2} & 0 \\
 -\frac{1}{\sqrt{2}} & 0 & \sqrt{2} \\
\end{array}
\right],
\end{equation}
\begin{equation}
A_2 = 
\frac{1}{\sqrt{6}} \left[
\begin{array}{ccc}
 \sqrt{2} & \frac{1}{\sqrt{2}} & -\frac{1}{\sqrt{2}} \\
 -1 & \frac{3}{2} & \frac{1}{2} \\
 1 & \frac{1}{2} & \frac{3}{2} \\
\end{array}
\right].
\end{equation}

One can easily check, that $A_1^{\dagger}\, A_1 + A_2^{\dagger}\, A_2 = \mathbb{I}$, so, indeed, those operators represent a quantum channel. However, as $A_1 \, A_1{\dagger} + A_2 \, A_2^{\dagger} \neq \mathbb{I}$, this channel is not \emph{unital} and corollary \ref{qutrits} can not be applied directly. It is also not clear whether they generate a $^{\star}$ - algebra. However, using inequality \ref{eq:papacena}, one can observe that it suffices to consider the ,,words'' up to order $4$. We follow the procedure described in the general instructions and proceed as follows: 
\begin{enumerate}
\item Compute the products $w_i(A_1,A_2)$ up to order $i = 4$. 
\item Construct a rectangular matrix, later called $R$ with columns obtained by reshaping square matrices $w_i(A_1,A_2)$ computed in item 1.
\item Perform the Gaussian elimination on $R$ in order to find the basis.
\end{enumerate}
In the considered example the basis turns out to consist of the matrices $A_1, \, A_2, \, A_1^2, A_1\,A_2, A_2\,A_1$:
\begin{equation*}
E_1 = A_1 = \left[
\begin{array}{ccc}
 1 & -1 & 1 \\
 \frac{1}{\sqrt{2}} & \sqrt{2} & 0 \\
 -\frac{1}{\sqrt{2}} & 0 & \sqrt{2} \\
\end{array}
\right], \; E_2 = A_2 = \left[
\begin{array}{ccc}
 \sqrt{2} & \frac{1}{\sqrt{2}} & -\frac{1}{\sqrt{2}} \\
 -1 & \frac{3}{2} & \frac{1}{2} \\
 1 & \frac{1}{2} & \frac{3}{2} \\
\end{array}
\right],
\end{equation*}
\begin{equation*}
E_3 = A_1^2 = \left[
\begin{array}{ccc}
 1-\sqrt{2} & -1-\sqrt{2} & 1+\sqrt{2} \\
 1+\frac{1}{\sqrt{2}} & 2-\frac{1}{\sqrt{2}} & \frac{1}{\sqrt{2}} \\
 -1-\frac{1}{\sqrt{2}} & \frac{1}{\sqrt{2}} & 2-\frac{1}{\sqrt{2}} \\
\end{array}
\right], \;  E_4 = A_1 \, A_2 = \left[
\begin{array}{ccc}
 2+\sqrt{2} & -1+\frac{1}{\sqrt{2}} & 1-\frac{1}{\sqrt{2}} \\
 1-\sqrt{2} & \frac{1}{2}+\frac{3}{\sqrt{2}} & -\frac{1}{2}+\frac{1}{\
\sqrt{2}} \\
 -1+\sqrt{2} & -\frac{1}{2}+\frac{1}{\sqrt{2}} & \
\frac{1}{2}+\frac{3}{\sqrt{2}} \\
\end{array}
\right]
\end{equation*}
\begin{equation*}
E_5 = A_2 \, A_1 = \left[
\begin{array}{ccc}
 1+\sqrt{2} & 1-\sqrt{2} & -1+\sqrt{2} \\
 -1+\frac{1}{\sqrt{2}} & 1+\frac{3}{\sqrt{2}} & -1+\frac{1}{\sqrt{2}} \
\\
 1-\frac{1}{\sqrt{2}} & -1+\frac{1}{\sqrt{2}} & 1+\frac{3}{\sqrt{2}} \\
\end{array}
\right].
\end{equation*}
Therefore:
\begin{equation}
D = \mathrm{dim}(\mathcal{A}(A_1,A_2)) < n^2,
\end{equation}
entailing that the algebra is reducible - but without any information whether it is a $^{\star}$-algebra or even a semisimple algebra. 
Performing the Gaussian elimination we can observe, that:
\begin{equation}
A_1^{\dagger} = \frac{1}{\sqrt{6}} \left(a_1\, E_1 + a_2\, E_2 + a_4\, E_4 + a_5 \, E_5\right) = a_1 A_1 + a_2 A_2 + a_4 A_1 A_2 + a_5 A_2 A_1 
\end{equation}
where:
\begin{equation*}
a_1 = \frac{1}{14} \left(3-8 \sqrt{2}\right), \quad a_2 = \frac{1}{28} \left(6+5 \sqrt{2}\right), \quad a_4 = \frac{1}{2 \sqrt{2}-8}, \quad a_5 = \frac{1}{14} \left(5+3 \sqrt{2}\right).
\end{equation*}
Analogously, 
\begin{equation}
A_2^{\dagger} = \frac{1}{\sqrt{6}} \left(b_1\, E_1 + b_2\, E_2 + b_4\, E_4 + b_5 \, E_5 \right) = b_1 A_1 + b_2 A_2 + b_4 A_1 A_2 + b_5 A_2 A_1 ,  
\end{equation}
where
\begin{equation*}
b_1 = \frac{1}{14} \left(11+8 \sqrt{2}\right), \quad b_2 = \frac{1}{28} \left(22-5 \sqrt{2}\right), \quad b_4 = \frac{1}{28} \left(4+\sqrt{2}\right), \quad b_5 = \frac{1}{14} \left(-5-3 \sqrt{2}\right)
\end{equation*}
Given the above equations it seems quite obvious that the algebra $\mathcal{A}(A_1,A_2)$ is a $^{\star}$-algebra. Moreover, from the trace  preservation conditions, we know it contains an
identity matrix. It is now clear that the only possible division into irreducible blocks is $3 = 2+1$ (to generate $5$ - dimensional algebra), meeting the requirements of Remark 1.  

\section{Concluding remarks}
The main objective of this paper was to present some examples and practical tools enabling us 
to ,,probe'' the structure of the algebra generated by the given matrices which determine quantum channels. 

In section 2 we discussed how the classical theorem of Groh \cite{groh}, concerning the peripheral spectrum of irreducible channel, can be generalized to cover channels that are not necessarily irreducible. We showed that by assuming that the algebra generated by the Kraus operators is a $^{\star}$-algebra and that its block decomposition consists of blocks of pairwise distinct dimensions, the peripheral spectrum contains only roots of unity. In section 3 we provided operational criteria based on the Shemesh and Amitsur - Levitzki theorems which allow one to analyze the structure of the 
algebra and to verify validity of the assumptions concerning different dimensionalities of the blocks.  

In section 4 we demonstrated a practical example of dealing with the generalized case of \emph{non unital} quantum
maps by applying  Remark \ref{rem:generalization}. 
The techniques  presented in this work allow us to gain an insight into the spectrum of the 
superoperator corresponding to the map by using only elementary arithmetic operations, which include 
multiplication and addition of matrices and solving linear equations. 
Therefore, one can apply these methods, for instance, 
to analyze entire families of superoperators dependent on given parameters.

The physical importance of the presented topic is closely related to 
iterated quantum dynamics (multiple application of a given quantum operation).
Knowing eigenvalues of the superoperator belonging to the
peripheral spectrum one can predict the maximal length 
of the cycles which may occur during the asymptotic dynamics.
Described methods of investigating the structure of matrix algebra 
could also be applied to the problem of quantum compression \cite{bluhm}. 

 
 Let us conclude the paper by presenting a short list of open questions.
 The subject of characterizing the peripheral spectrum of an irreducible but not primitive quantum channel
 is left over for a further study.
  It could be also interesting to examine the connection between the number of eigenvalues in 
 the peripheral spectrum and the maximal dimension of the space $S_m(\Phi)$ defined 
 in the hypothesis of the Theorem \ref{thm:primitive}. Furthermore, developing more effective algorithms 
 for computing the basis of the generated algebra and verifying its closeness under the Hermitian 
 conjugation would allow us to provide a detailed description of spectral properties of 
 quantum channels and to predict asymptotic properties of discrete quantum dynamics.

\section{Acknowledgments}

M.B. is grateful to Agnieszka Proszewska for insightful discussions and critical reading of the manuscript. M.B. also acknowledges the support of Adrian Stencel for remarks and comments. Financial support by the Polish National Science Center (NCN) under the grants number 2016/20/W/ST4/00314 (M.B.), 2015/19/B/ST1/03095 (A.J.) and DEC-2015/18/A/ST2/00274 (K.Ż) is also acknowledged.

\appendix
\section{Proof of Lemma \ref{lem:dist_dim}}

First we recall a simple fact concerning the matrix norms : Let $||\cdot||_p$ denote the standard norm $\ell_p$ in the space $\mathbb{C}^n$. Matrix $A\in \mathcal{M}_{m,n}(\mathbb{C})$ with $m$ rows and $n$ columns forms an operator $\mathbb{C}^n\rightarrow \mathbb{C}^m$. In the matrix space $\mathcal{M}_{m,n}(\mathbb{C})$ we have the standard norm induced by the norm $||\cdot||_p$:
\begin{equation}
\label{eq:matrix_norm_def}
||A||_p = \mathrm{sup}_{x\neq0}\frac{||Ax||_p}{||x||_p} = \mathrm{inf} \lbrace C\geq0 \, \colon \forall_{x\in \mathbb{C}^n} ||Ax||_p\leq C||x||_p \rbrace.
\end{equation}
Furthermore, the infinity norm $||\cdot||_{\infty}$ which is equal to the largest singluar value of the matrix. 

\begin{prop}\label{prop:norma_operatorowa}
If the vector 2-norm $||\cdot||_2$ is a norm in $\mathbb{C}^n$ induced by the Euclidean scalar product, then for each $A \in \mathcal{M}_{m,n}(\mathbb{C})$ one has :
\begin{equation}
||A||_2 = ||A||_{\infty}.
\end{equation}
where $||A||_2$ is the norm in matrix space defined by (\ref{eq:matrix_norm_def}). Moreover, the set
\begin{equation}
\mathcal{V} = \lbrace x \in \mathbb{C}^n \, \colon ||Ax||_2 = ||A||_2 \, ||x||_2 \rbrace
\end{equation}
is a vector subspace of $\mathbb{C}^n$. 
\end{prop}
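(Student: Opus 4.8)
The plan is to reduce both assertions to the spectral decomposition of the positive semidefinite matrix $A^{\dagger}A \in \mathcal{M}_n(\mathbb{C})$; its eigenvalues are the squared singular values $\sigma_1^2 \geq \sigma_2^2 \geq \cdots \geq \sigma_n^2 \geq 0$ of $A$, and by definition $||A||_{\infty} = \sigma_1$. Write $A^{\dagger}A = V D V^{\dagger}$ with $V$ unitary and $D = \mathrm{diag}(\sigma_1^2,\ldots,\sigma_n^2)$.

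First I would prove the norm identity. For $x \in \mathbb{C}^n$ one has $||Ax||_2^2 = \langle x, A^{\dagger}A\, x\rangle$, so setting $y = V^{\dagger}x$ (a unitary change of variables, hence $||y||_2 = ||x||_2$) gives
\begin{equation}
||Ax||_2^2 = \sum_{i=1}^n \sigma_i^2\, |y_i|^2 \;\leq\; \sigma_1^2 \sum_{i=1}^n |y_i|^2 = \sigma_1^2\, ||x||_2^2 .
\end{equation}
Thus $||A||_2 \leq \sigma_1$, and equality is realized by any unit eigenvector of $A^{\dagger}A$ for the eigenvalue $\sigma_1^2$, so the supremum in (\ref{eq:matrix_norm_def}) is attained and $||A||_2 = \sigma_1 = ||A||_{\infty}$.

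For the subspace claim I would read off the equality case above. Since each summand satisfies $\sigma_i^2|y_i|^2 \leq \sigma_1^2|y_i|^2$, equality in the displayed inequality holds if and only if $y_i = 0$ whenever $\sigma_i^2 < \sigma_1^2$, that is, if and only if $x$ lies in the span of the eigenvectors of $A^{\dagger}A$ belonging to $\sigma_1^2$. Hence
\begin{equation}
\mathcal{V} = \ker\!\big(A^{\dagger}A - \sigma_1^2\, \mathbb{I}\big),
\end{equation}
the top eigenspace of $A^{\dagger}A$, which is a linear subspace of $\mathbb{C}^n$ (and in particular contains $0$); this gives the final assertion.

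The reasoning is essentially routine — the whole proposition restates elementary facts about $A^{\dagger}A$. The only point deserving a line of justification is the equality analysis in the displayed inequality: a bound of the form $\sum_i \sigma_i^2|y_i|^2 \leq \sigma_1^2\sum_i|y_i|^2$ is saturated precisely when all the weight $|y_i|^2$ is concentrated on the indices where $\sigma_i^2$ is maximal, which is immediate term by term. I do not expect any genuine obstacle.
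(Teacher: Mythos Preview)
Your argument is correct and entirely standard; diagonalizing $A^{\dagger}A$ and reading off both the norm identity and the equality case is exactly the right route, and your identification $\mathcal{V}=\ker(A^{\dagger}A-\sigma_1^2\,\mathbb{I})$ is clean and covers the degenerate case $A=0$ as well. Note, however, that the paper does not actually supply a proof of this proposition: it is introduced in the appendix with the phrase ``we recall a simple fact'' and then used immediately in the proof of Lemma~\ref{lem:dist_dim}. So there is no paper proof to compare against --- your write-up simply fills in what the authors treat as background.
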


\emph{Proof of Lemma \ref{lem:dist_dim}}:
Let assume that $|\lambda|=1$, $X\in \mathcal{M}_n(\mathbb{C}), X\neq 0$ and
\begin{equation}\label{eq:wlasne_peripheral}
\sum_{i=1}^K A_i X A_i^{\dagger} = \lambda X.
\end{equation} 
Using corollary  \ref{wn:dekompozycja_unitalnych} we choose the orthonormal basis  $\lbrace e_j \rbrace_{j=1}^n$ of the space $\mathcal{H}\simeq \mathbb{C}^n$ in which all $A_i$ are of the form (\ref{eq:dekompozycja_unitalnych}). Now it is clear that it suffices to show the thesis for the Kraus operators in this basis. Zero block is not present due to the unitality. Let $P_m$ be the hermitean projector onto the space $\mathcal{H}_m$ from the decomposition (\ref{eq:dekompozycja_hilberta}), $m=1,\ldots,N$. Denote $X_{jk} = P_j X P_k$ for $j,k=1,\ldots,n$. Equation (\ref{eq:wlasne_peripheral}) can be rewritten in the block form
\begin{equation}\label{eq:peripheral_blokowa}
\sum_{i=1}^K
\begin{bmatrix}
A_{i1} & 0 & \hdots & 0 \\
0      &    A_{i2} & \hdots & 0\\
\vdots & \hdots & \ddots & \vdots \\
0 & \hdots & 0 & A_{iN} \\
\end{bmatrix}
\begin{bmatrix}
\tilde{X}_{11} & \tilde{X}_{12}& \hdots & \tilde{X}_{1N} \\
\tilde{X}_{21} & \tilde{X}_{22} & \hdots & \tilde{X}_{2N} \\
\vdots & \hdots & \ddots & \vdots \\
\tilde{X}_{N1} & \hdots & \hdots &\tilde{X}_{NN}\\
\end{bmatrix}
\begin{bmatrix}
A_{i1}^{\dagger} & 0 & \hdots & 0 \\
0      &    A_{i2}^{\dagger} & \hdots & 0\\
\vdots & \hdots & \ddots & \vdots \\
0 & \hdots & 0 & A_{iN}^{\dagger} \\
\end{bmatrix}
=
\end{equation}
\begin{equation}
=\lambda
\begin{bmatrix}
\tilde{X}_{11} & \tilde{X}_{12}& \hdots & \tilde{X}_{1N} \\
\tilde{X}_{21} & \tilde{X}_{22} & \hdots & \tilde{X}_{2N} \\
\vdots & \hdots & \ddots & \vdots \\
\tilde{X}_{N1} & \hdots & \hdots & \tilde{X}_{NN}\\
\end{bmatrix}.
\end{equation}
We obtain $N^2$ equations:
\begin{equation}\label{eq:peripheral1}
\sum_{i=1}^K A_{ij} \tilde{X}_{jk} A_{ik} = \lambda \tilde{X}_{jk},\quad j,k=1,\ldots, N.
\end{equation}
Now we introduce matrices in the block form :
\begin{equation}
B_m = 
\begin{bmatrix}
A_{1m} & A_{2m} & \hdots & A_{Km}
\end{bmatrix}
,\quad
B_m \in \mathcal{M}_{d_m,Kd_m}(\mathbb{C}), \quad m=1,\ldots N
\end{equation}
and
\begin{equation}
\tilde{X}^{(K)}_{jk} = \mathrm{diag}(\underbrace{\tilde{X}_{jk},\tilde{X}_{jk},\ldots, \tilde{X}_{jk}}_{K}), \quad 
\tilde{X}^{(K)}_{jk} \in \mathcal{M}_{Kd_j,Kd_k}(\mathbb{C}).
\end{equation}
Matrices $B_m$ can be treated as the maps $\bigoplus_{i=1}^K\mathcal{H}_m \rightarrow \mathcal{H}_m$, and $\tilde{X}^{(K)}_{jk}$ as the maps  $\bigoplus_{i=1}^K\mathcal{H}_k \rightarrow \bigoplus_{i=1}^K\mathcal{H}_j$. 
 From unitality of $\Phi$ we see that the singular values of  $B_m$ equal 1 and from Proposition \ref{prop:norma_operatorowa} one has 
$||B_m||_2=||B_m^{\dagger}||_2=1$. Equations (\ref{eq:peripheral1}) can be rewritten in the form:
\begin{equation*}\label{eq:peripheral2}
B_j \tilde{X}^{(K)}_{jk} B_k^{\dagger} = \lambda \tilde{X}_{jk},\quad j,k = 1,\ldots, N.
\end{equation*}
Choose $j,k$ and assume, that $\tilde{X}_{jk} \neq 0$. Then $||\tilde{X}^{(K)}_{jk}||_2 = ||\tilde{X}_{jk}||_2 \neq 0$ as well. 
Using Proposition \ref{prop:norma_operatorowa} consider the subspace where $\tilde{X}_{jk}$ achieves its norm:
\begin{equation}
\mathcal{V}_{jk} = \lbrace v\in \mathcal{H}_{k}\, \colon ||\tilde{X}_{jk}\,v||_2 = ||\tilde{X}_{jk}||_2 \, ||v||_2 \rbrace \neq \lbrace 0 \rbrace.
\end{equation}
Take $v\in \mathcal{V}_{jk}$. Using the fact, that $||B_j||_2=||B_k^{\dagger}||_2=1$ and $|\lambda|=1$ we have 
\begin{equation*}
||\tilde{X}_{jk}||_2 \, ||v||_2 = ||\tilde{X}_{jk}\,v||_2=||\lambda \tilde{X}_{jk}v||_2 = 
||B_j \tilde{X}^{(K)}_{jk} B_k^{\dagger}v||_2 \leq ||\tilde{X}^{(K)}_{jk} B_k^{\dagger}v||_2 \leq^{(\star)}
\end{equation*}
\begin{equation}\label{eq:osiaganormerownosc}
\leq ||\tilde{X}^{(K)}_{jk}||_2\, ||B_k^{\dagger}v||_2 \leq 
||\tilde{X}_{jk}||_2\, ||v||_2.
\end{equation}
This means that inequality $(\star)$ above is saturated. But one has also
\begin{equation}
\tilde{X}^{(K)}_{jk} B_k^{\dagger}v = 
\begin{bmatrix}
\tilde{X}_{jk} A_{1k}^{\dagger} v \\
\tilde{X}_{jk} A_{2k}^{\dagger} v \\
\vdots \\
\tilde{X}_{jk} A_{Kk}^{\dagger} v \\
\end{bmatrix}, \quad
B_k^{\dagger} v = 
\begin{bmatrix}
A_{1k}^{\dagger} v \\
A_{2k}^{\dagger} v \\
\vdots \\
A_{Kk}^{\dagger} v\\
\end{bmatrix}.
\end{equation}
This implies
\begin{equation}
||\tilde{X}_{jk}A_{ik}^{\dagger} v||_2 =  ||\tilde{X}_{jk}||_2\,||A_{ik}^{\dagger} v||_2,\quad i=1,\ldots, K,
\end{equation}
and thus
\begin{equation}
A_{ik}^{\dagger} v \in  \mathcal{V}_{jk},\quad i=1,\ldots ,K.
\end{equation}
In consequence, $\mathcal{V}_{ik}$ is the common invariant subspace of $A_{ik}^{\dagger}$ for $i=1,\ldots,K$.       
 But $\mathcal{A}(A_{1k}^{\dagger},\ldots, A_{Kk}^{\dagger}) = \mathcal{A}(A_{1k},\ldots, A_{Kk}) = \mathcal{M}_{d_k}(\mathbb{C})$, therefore due to Burnside theorem we have $\mathcal{V}_{jk} = \mathcal{H}_{k} \simeq \mathbb{C}^{d_k}$ (since $\mathcal{V}_{jk}$ is nontrivial). Hence for $v\in \mathcal{H}_k$ one has $||\tilde{X}_{jk} v||_2 = ||\tilde{X}_{jk}||_2\,||v||_2$, and consequently
\begin{center}
\emph{$\frac{\tilde{X}_{jk}}{||\tilde{X}_{jk}||_2}$  is an isometry  $\mathcal{H}_k\rightarrow \mathcal{H}_j$                    $(\star\star)$}.
\end{center}
Repeating the above steps for $\tilde{X}_{jk}^{\dagger}$, which can be treated as an operator $\mathcal{H}_j\rightarrow \mathcal{H}_k$ we have 
\begin{center}
\emph{$\frac{\tilde{X}_{jk}^{\dagger}}{||\tilde{X}_{jk}^{\dagger}||_2}$  is an isometry $\mathcal{H}_j\rightarrow \mathcal{H}_k$                    $(\star\star\star)$} .
\end{center}
So if $\tilde{X}_{jk}\neq 0$ then from $(\star\star)$ and $(\star\star\star)$ we have that $\mathcal{H}_j$ and $\mathcal{H}_k$ $d_j=d_k$ and
\begin{equation}\label{eq:unitarnoscrowne bloki}
\tilde{X}_{jk} \tilde{X}_{jk}^{\dagger} = \tilde{X}_{jk}^{\dagger} \tilde{X}_{jk}=  ||\tilde{X}_{jk}||^2_2 \,\mathbb{I}_{d_j}.
\end{equation}
Therefore, if $d_j \neq d_k$, then $\tilde{X}_{jk} =0$ and the proof is completed.  
\vspace{20pt}

\end{document}